\begin{document}
\newtheorem{assumption}{Assumption}
\newtheorem{conclusion}{Conclusion}
\newtheorem{example}{Example}
\newtheorem{remark}{Remark}

\newcommand{\rev}{} 
\newcommand{\revv}{}

\title{Diffusion of new products with recovering consumers}
\date{\today}
 \author{G. Fibich\thanks{School  of Mathematical Sciences,
   Tel Aviv University, Tel Aviv 69978, Israel (fibich@tau.ac.il, www.math.tau.ac.il/$\sim$fibich).}}

\maketitle
\slugger{siap}{siap}{xx}{x}{x--x}
\begin{abstract}
We consider the diffusion of new products in the discrete Bass-SIR model, in which consumers who adopt the product can later ``recover'' and stop influencing their peers to adopt the product. To gain insight into the effect of the social network structure  on  the diffusion, we focus on two extreme cases. In the ``most-connected'' configuration where all consumers are inter-connected (complete network), averaging over all consumers leads to an aggregate model, which combines the Bass model for diffusion of new products with the SIR model for epidemics. In the ``least-connected'' configuration where consumers are arranged on a circle and each consumer can only be influenced by his left neighbor (one-sided 1D network), averaging over all consumers leads to a different aggregate model which is linear, and can be solved explicitly. We conjecture that for any other network, the diffusion is bounded from below and from above by that on a one-sided 1D network and on a complete network, respectively. When consumers are arranged on a circle and each consumer can be influenced by his left and right neighbors (two-sided 1D network), the diffusion is strictly faster than on a one-sided 1D network. This is  different from the case of non-recovering adopters, where the diffusion on one-sided and on two-sided 1D networks is identical. We also propose a nonlinear model for recoveries, and show that consumers' heterogeneity has a negligible effect on the aggregate diffusion. 

\end{abstract}

\begin{keywords} marketing, diffusion in social networks, Bass model, SIR model, agent-based models \end{keywords}
\begin{AMS}91B99,92D25,90B60\end{AMS}

\newcommand{\IS}{\underline{\mbox{IS}}}
\newcommand{\SaS}{\underline{\mbox{SS}}}
\newcommand{\ISk}{{\underline{\mbox{I$S_k$}}}}
\newcommand{\ISkI}{{\underline{\mbox{I$S_k$I}}}}
\newcommand{\SkI}{{\underline{\mbox{$S_k$I}}}}
\newcommand{\RSk}{{\underline{\mbox{R$S_k$}}}}
\newcommand{\SSk}{{\underline{\mbox{S$S_k$}}}}
\newcommand{\ISSk}{{\underline{\mbox{IS$S_k$}}}}
\newcommand{\Sk}{{\underline{\mbox{$S_k$}}}}
\newcommand{\Skplusone}{{\underline{\mbox{$S_{k+1}$}}}}
\newcommand{\ISkplusone}{{\underline{\mbox{I$S_{k+1}$}}}}
\newcommand{\ISkplusoneI}{{\underline{\mbox{I$S_{k+1}$I}}}}
\newcommand{\SkplusoneI}{{\underline{\mbox{$S_{k+1}$I}}}}
\newcommand{\RSkplusone}{{\underline{\mbox{R$S_{k+1}$}}}}
\newcommand{\Skplustwo}{{\underline{\mbox{$S_{k+2}$}}}}
\newcommand{\SI}{\underline{\mbox{SI}}}
\newcommand{\ISI}{\underline{\mbox{ISI}}}
\newcommand{\ISS}{\underline{\mbox{ISS}}\,}
\newcommand{\ISR}{\underline{\mbox{ISR}}\,}

\newtheorem{obsrv}{Observation}
\newtheorem{conj}{Conjecture}

\newcommand{\Remark}{\vspace{0mm} \parindent=0pt
         {\bf Remark.} \hspace{0mm} \parindent=3ex}

\section{Introduction}

Diffusion of new products is a fundamental problem in marketing research. The diffusion 
begins when the product is first introduced into the market, and progresses as
consumers adopt the product. Here, to {\em adopt} the product means to buy it (e.g., Ipad), download it (e.g., Skype), try it  (e.g., Google search), use it (e.g., Facebook), etc.

The first quantitative model of diffusion of new products
was proposed in~1969 by Bass~\cite{Bass-69}. In this model, the diffusion 
depends on two parameters, $p$ and~$q$, which correspond to the likelihood of a non-adopter to adopt the product 
due to {\em external influences} by mass media or commercials, and due to {\em internal influences} by
individuals who already adopted the product (peer effects/word of mouth), respectively. 
The Bass model inspired a
huge body of theoretical and empirical research~\cite{Mahajan-93}
(in 2004 it was named one of the ten most-cited papers in the 50-year
history of Management Science~\cite{Hopp-04}).
 Most of its extensions, however, were aggregate (macroscopic) models. 
More recently, diffusion of new products has been studied 
using discrete, agent-based models (ABM)~\cite{OR-10,Garcia-05,Goldenberg-09,Goldenberg-07,GLM-01,GLM-10}. 
This kinetic-theory approach has the advantage that it reveals the relation between the (microscopic) behavior of individual consumers 
and the aggregate market diffusion,
and allows individual-level heterogeneity within both adoption decisions and
social networks~\cite{Rand-11}.

In~\cite{Bass-SIR-model-16} we introduced the discrete Bass-SIR model 
for the diffusion of new products.
Unlike previous models, it allows for the possibility that adopters stop influencing their peers after some time. This can occur because they bought the product but stopped using it, because they stopped discussing it with their friends, because their friends became indifferent to their influence, etc. 
The motivation for this model came from a recent study in which Graziano and
Gillingham empirically examined the adoption of solar photovoltaic systems in Connecticut~\cite{Graziano-15}. 
They observed a strong relationship between adoption and the number of nearby previously installed systems. 
In particular, they noted that this  effect of nearby systems {\em  diminished with
time}. This temporal decay of internal influence can be attributed to any of the above reasons.
 In addition, most people who install solar panels 
put a small sign in their front yard announcing the installation. Over time, some of these signs probably do not survive.  
Another empirical evidence for the temporal decay of internal influence follows from  
Banerjee et al.~\cite{Banerjee-13,Banerjee-14} who studied a diffusion model in which information is only passed for a finite number of iterations.  
They found that when using the model, the finite duration of passing information makes a big difference, and that by including the limits on passing information, the model much more closely matches the data.   

The possibility that adopters become non-contagious was previously considered in 
studies that used SIR-type models.  
As pointed out in~\cite{Bass-SIR-model-16}, however,  
{\em the SIR model is inappropriate for diffusion of new products, and its diffusion dynamics
is very different from that of the Bass-SIR model.}
 In particular, in
 the SIR model, there is a threshold quantity which determines whether an epidemic occurs or the disease simply dies out.
 In contrast, in the Bass-SIR model everyone ultimately adopt the product, since even in the absence of internal influences, all consumers eventually adopt due to external effects. 
This does not mean that the entire population will end up adopting the product, but rather that 
the Bass-SIR model only takes into account the people in the population that ultimately adopt the product (the ``market potential''). Thus, the Bass-SIR model is concerned with the rate at which the aggregate diffusion takes place.  
For example, a typical application of the Bass-SIR model is to compute the market half-life time~$T_{1/2}$ 
at which the product would  be adopted by~50\% of its market potential, and to determine how~$T_{1/2}$ depends on the network structure and on the recovery rate~$r$.

The focus of this paper is on analyzing the effect of the social network structure on the aggregate 
diffusion dynamics in the discrete Bass-SIR model, which is a continuous-time Markov chain (CTMC).
The paper is organized as follows.
In Section~\ref{sec:review} we review the discrete Bass-SIR model for 
diffusion of new products with recovering adopters.
In Section~\ref{sec:prem-analysis}  we obtain explicit solutions 
for the case of purely-external adoptions ($q=0$). Then, for $q>0$ 
  we show that the effect of recovery on the diffusion depends
 on the dimensionless variable~$r/q$, where~$r$ is the probability of recovery. 
Thus, when $r \ll q$, recovery only leads to a slightly slower diffusion, whereas
when $r \gg q$, diffusion is much slower, and is similar to that in the case of purely-external adoptions.

In Section~\ref{sec:1D} we consider periodic 1D networks where consumers are located on a circle.
We show that the one-sided 1D model, where each consumer can only be influenced by his left neighbor, 
reduces in the limit of an infinite number of consumers to the
{\em one-sided 1D Bass-SIR model}.  
This novel reduced model for the aggregate diffusion dynamics consists of four linear ODEs, which can be  solved 
explicitly for the aggregate adoption curve.  When each consumer can be influenced by his left and right neighbors,
taking the limit of an infinite number of consumers leads to 
a different aggregate model, the {\em two-sided 1D Bass-SIR model}.
We show analytically and numerically that
diffusion in the two-sided case is slightly faster than in the one-sided case.
This result is surprising, since in the absence of recovery, the diffusion  
is identical in both cases.

In the case of a nonspatial (complete) network where all consumers are inter-connected, averaging over all consumers leads to the {\em nonspatial Bass-SIR model}, which combines the Bass model for diffusion of new products with the SIR model for diffusion of epidemics. 
Since the one-sided 1D network and the nonspatial network are the least- and most-connected networks, respectively, we conjecture that for 
any other connected network, diffusion is faster than in the one-sided 1D~model, and slower than in the nonspatial model (Section~\ref{sec:dimensionality}). 
This conjecture is supported numerically for $D$-dimensional Cartesian networks, scale-free networks, and small-worlds networks. 
As the probability for recovery~$r$ increases, 
internal (word-of-mouth) effects become weaker. As a result,  
the half-life time~$T_{1/2}$ increases,
and the dependence of the diffusion dynamics on the network structure decreases.
Nevertheless, the dependence of~$T_{1/2}$ on the network structure {\em increases} with~$r$ for mild 
values of~$r$.

The assumption that consumers are homogeneous is convenient for the analysis. A more realistic assumption, however, is that each consumer has different parameters~$p_i$, $q_i$, and~$r_i$.
We show that in the case of vertex-transitive networks, 
the difference between diffusion in the heterogeneous and homogeneous cases is quadratically 
small in the level of heterogeneity (Section~\ref{sec:heterogeneous}). 
Indeed, our simulations reveal that even with 50\% heterogeneity,
diffusion in the heterogeneous case is only slightly lower than the homogeneous one, both for vertex-transitive networks (periodic Cartesian networks), and for 
networks which are not  vertex-transitive (small-word, scale-free).

In Section~\ref{sec:nl-recovery} we relax the assumption that adopters recover independently of other adopters.
In analogy with the Bass model, we introduce a nonlinear recovery model, in which consumers can recover both externally (i.e., independently of other consumers) and internally (i.e., because of interactions with recovered/dissatisfied consumers).  
This situation arises in online social networks, where some people leave the social network because they are unhappy with it, while others 
leave it because their friends are no longer there. 

Obviously, social networks are neither complete not one-dimensional. Nevertheless,  
the above results should be relevant to diffusion on real social networks. Indeed, in~\cite{Bass-SIR-model-16} we showed that a small-worlds structure has a negligible effect on the diffusion, and that diffusion on scale-free networks is equivalent to that on Cartesian grids. 
In addition, in the case of solar photovoltaic systems, the adoption is predominantly influenced by nearby previously installed systems~\cite{Graziano-15},  and so the social network in essentially a 2D Cartesian grid. 
The 1D Cartesian grid which is analyzed in this study, therefore, in a reasonable toy model, which has the advantage that it is amenable to analysis.  
Finally, our results for the 1D and a complete networks are conjectured to provide lower and upper bounds for the diffusion in any social network.

\section{Discrete Bass-SIR model}
\label{sec:review}

Our starting point is the discrete Bass-SIR model for diffusion of new products with recovering consumers, which was recently introduced in~\cite{Bass-SIR-model-16}. A new product is introduced to a market with $M$~consumers at time $t=0$.
Initially all consumers are non-adopters.  
If a consumer adopts the product, he becomes a contagious adopter. 
A contagious adopter can later ``recover'' and become a non-contagious adopter. 
The consumers belong to a social network which is described by an undirected or directed graph.
Let~$k_j$ denote the number of consumers connected to consumer~$j$
(the ``degree'' or  ``indegree'' of node~$j$, respectively),
and assume that there are no ``isolated'' consumers (i.e., $k_j \ge 1$ for all~$j$).
 If~$j$ did not adopt the product by time~$t$,
his probability to adopt (and thus become contagious) in~$(t,t+\Delta t)$ is
\begin{subequations}
\label{eq:discrete-Bass-SIR}
\begin{equation}
\label{LinearAdoptionRates-new}
\text{Prob}{j~\text{adopts in}\choose {(t,t+\Delta t)}}= \left(p+q \frac{i_j(t)}{k_j}\right) \Delta t+o(\Delta t) 
\qquad \mbox{as $\Delta t\to0$},
\end{equation}
 where~$i_j(t)$ is the number of contagious adopters connected to~$j$ at time~$t$. 
 The parameters~$p$ and~$q$ describe the likelihood of an individual to
adopt the product due to {\em external influences} such as mass
media or commercials, and due to {\em internal influences} by
contagious consumers who have already adopted the product ({\em word of mouth}, {\em peer effects}),
respectively.

The magnitude of internal influences experienced by~$j$ increases linearly with the number~$i_j$  
 of contagious adopters connected to~$j$, and is normalized by~$k_j$, see~\eqref{LinearAdoptionRates-new},  so that
 regardless of the network structure, the maximal internal influence that~$j$ can experience (when all his social connections are contagious adopters) is~$q$. The normalization by~$k_j$ 
allows for a meaningful comparison of the effect of the network structure. Indeed, in the absence of normalization [i.e., if we set $k_j=1$ in~\eqref{LinearAdoptionRates-new}], it is trivial that adding more connections to a network leads to a faster diffusion.
With the normalization by~$k_j$, however, it is not clear e.g., whether diffusion in the one-sided 1D case is slower than in the 
two-sided 1D case (see Section~\ref{sec:one-side>two-side}).

As in the SIR model, we assume that if~$j$ was  a contagious adopter  at time~$t$,
his probability  to recover and become non-contagious  in~$(t,t+\Delta t)$  is
\begin{equation}
\label{eq:recover-rate}
\text{Prob}{j~\text{recovers in}\choose {(t,t+\Delta t)}} = r \Delta t+o(\Delta t)
\qquad \mbox{as $\Delta t\to0$}.
\end{equation}
\end{subequations}
In Section~\ref{sec:nl-recovery} we consider a more general model for recovery.

We denote the fraction of non-adopters (``Susceptible''), contagious adopters (``Infected''), and
non-contagious adopters (``Recovered'') at time~$t$ by~$S(t)$, $I(t)$, and $R(t)$, respectively.   
The fraction of adopters (contagious and recovered)  is 
$$
f(t) = I(t) + R(t)  = 1-S(t).
$$
Since the product is new, initially all consumers are non-adopters, and so 
\begin{equation}
\label{eq:IC-SIR-general}
S(0) = 1, \qquad I(0) = R(0)  = f(0) = 0.
\end{equation}

\section{Preliminary  analysis}
\label{sec:prem-analysis}

In general, the effect of internal influences on the adoption curve~$f(t)$ 
depends on the structure of the social network.
In this section we derive some results that hold for any network.

\subsection{Purely-external adoptions}

 In the absence of internal effects ($q=0$),  relation~\eqref{LinearAdoptionRates-new} reduces to
\begin{equation}
\label{LinearAdoptionRates-p-only}
\text{Prob}{j~\text{adopts in}\choose {(t,t+\Delta t)}}= p \Delta t+o(\Delta t),
\qquad \mbox{as $\Delta t\to0$}.
\end{equation}
Therefore, the equations for $S$, $I$, and $R$ read
\begin{subequations}
\label{eq:Bass_SIR-q=0}
\begin{equation}
 S'(t) = -pS , \qquad I'(t) = pS -r I, \qquad  R'(t) = r I,
\end{equation}
where $\mbox{}' = \frac{d}{dt}$, subject to
\begin{equation}
\label{eq:Bass_SIR_ic-q=0}
S(0) = 1, \qquad I(0) = 0, \qquad R(0)  = 0.
\end{equation}
\end{subequations}

The solution of~\eqref{eq:Bass_SIR-q=0} is
\begin{subequations}
\label{eq:ext}  
\begin{equation}
S = S^{\rm ext}(t):= e^{-p t}, \quad 
I = I^{\rm ext} :=  p \frac{e^{-rt}-e^{-pt}}{p-r}, \quad 
R = R^{\rm ext} :=1-\frac{pe^{-rt}-re^{-pt}}{p-r}.
\end{equation}
In particular,
\begin{equation}
\label{eq:Sf_ext}  
 f =f^{\rm ext}(t; p):=1-e^{-pt}.
\end{equation}
\end{subequations}
Thus, in the absence of internal effects, recovery does not affect the adoption curve. Recovery does affect, however, the partition of 
adopters into contagious and recovered ones. For example,
$$
 I^{\rm ext} \sim e^{-rt}-e^{-pt}, \quad R^{\rm ext} \sim 1-e^{-rt}, \qquad r \ll p, 
$$
and
$$
 I^{\rm ext} \sim \frac{p}{r} \left(e^{-pt}-e^{-rt}\right), \quad R^{\rm ext} \sim 1-e^{-pt}, \qquad r \gg p.
$$

Once internal effects are added, recovery affects the adoption curve, since the rate of new internal adoptions depends on~$I$. Indeed, by~\eqref{LinearAdoptionRates-new}, internal effects accelerate the adoption process, i.e., 
\begin{equation}
\label{eq:f>f_ext}  
f(t;p,q,r)>f^{\rm ext}(t;p), \qquad   t>0,\quad q>0.
\end{equation}
In particular, in the Bass-SIR model~\eqref{eq:discrete-Bass-SIR}, everyone eventually adopt the product.

\subsection{Dimensionless parameter $r/q$}
\label{sec:dim-analysis}

Since the case of most interest is when the new product spreads predominantly through word-of-mouth (i.e., $p \ll q$),  
we rescale time as $t^*:=qt$. 
Hence,
$$
  f(t;p,q,r) = f(t^*;p^*,r^*), \qquad   p^* := \frac{p}{q}, \quad
  r^* := \frac{r}{q}.
$$
This shows that the aggregate effect of recovery depends on the dimensionless parameter
$
r^*= \frac{r}{q}.
$ 
Since $r^*= \frac{I r \Delta t}{I q \Delta t}$, this parameter corresponds to the rate of  loss (recovery) of contagious adopters over the rate of the creation of new ones (when most consumers are still non-adopters). 
There are two limiting cases: 
\begin{itemize}
  \item  When $r \ll q$,  adopters have sufficient time to influence their neighbors before they become non-contagious. Hence,
  the effect of recovery is small, and diffusion is similar to that in the absence of recovery, i.e., 
$f(t;p,q,r) \approx f(t;p,q,r=0)$.
  
\item When $r \gg q$, adopters have little time to influence their neighbors before they become non-contagious. Hence, internal effects effectively disappear, and diffusion is driven by purely-external adoptions. Therefore, $f(t;p,q,r) \approx f^{\rm ext}(t;p)$, see~\eqref{eq:ext}.
In particular, diffusion is considerably slower than in the absence of recovery. 
\end{itemize}

Intuitively, as $r$ increases, 
internal influences last for shorter times, and therefore:
\begin{enumerate}
  \item Diffusion becomes slower, i.e., 
\begin{subequations} 
\begin{equation}
 \label{eq:df_dr<0}
\mbox{$f(t;p,p,r)$ is monotonically decreasing in $r$}.
\end{equation}
  \item Its dependence on the network structure decreases, i.e., if $f_{\rm I}$ and $f_{\rm II}$ denote the expected fractional adoption is networks~I and~II, then
\begin{equation} 
 \label{eq:df1-f2_dr<0}
\mbox{$\left|f_{\rm I}(t;p,p,r)-f_{\rm II}(t;p,p,r)\right|$ is monotonically decreasing in $r$}.
\end{equation}
\end{subequations} 

\end{enumerate}
In particular, as~$r$ increases from~0 to~$\infty$, $f$~decreases monotonically 
from $f(t;p,q,r=0)$ to 
$f(t;p,q,r=\infty) = f(t;p,q=0,r)=f^{\rm ext}(t;p)$.

\section{1D Networks}
  \label{sec:1D}

We now consider the ``least-connected'' network, namely,  
when consumers are located on a circle such that each consumer is only connected to one or two consumers.

\subsection{One-sided 1D networks}
\label{sec:One-sided-1D}

In the one-sided 1D~network, $M$ consumers are located on a circle, 
and each consumer is only influenced by his left neighbor. Since $k_j = 1$,
 relation~\eqref{LinearAdoptionRates-new} reads  
\begin{equation}
\label{LinearAdoptionRates_1D_onesided}
\text{Prob}{j~\text{adopts in}\choose {(t,t+\Delta t)}}= \left(p+q \, i_j(t) \right)\Delta t+o(\Delta t) 
\qquad \mbox{as $\Delta t\to0$},
\end{equation}
where $i_j(t)=1$ if $j-1$ is a contagious adopter at time~$t$, and $i_j(t) = 0$ otherwise.

A priori, finding the aggregate diffusion dynamics requires writing an ordinary differential equation
for the dynamics of each of the $3^M$ possible configurations.\footnote{Each of the $M$ consumers can be 
susceptible, infected, or recovered.} As $M \to \infty$, however, this infinite system can be
reduced to a system of 4~linear ordinary differential equations:
\begin{lemma}
\label{lem:SIR-1D}
Consider the discrete Bass-SIR model~\eqref{eq:discrete-Bass-SIR} on
a one-sided 1D network. 
As $M \to \infty$, the diffusion dynamics is governed by the {\bf one-sided 1D Bass-SIR model}
\begin{subequations}
\label{eq:1D_SIR}
\begin{equation}
\label{eq:1D_SIR-a}
S'(t) = -p S -  q \IS, \qquad 
{\IS}'(t) = p e^{-pt}S+ (q e^{-pt}-p -q -r)\IS,
\end{equation}
and
\begin{equation}
\label{eq:1D_SIR-b}
I'(t) = p S+ q \IS  -rI, \qquad
R'(t) = rI,
\end{equation}
subject to 
\begin{equation}
S(0) = 1, \qquad \IS(0) = I(0) = R(0) = 0.
\end{equation}
\end{subequations}
\end{lemma}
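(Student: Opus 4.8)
The plan is to replace the $3^M$-state Markov chain by a small hierarchy of equations for the probabilities of short consecutive blocks and to show that, in the one-sided geometry, this hierarchy closes. By the rotational symmetry of the circle all consumers are statistically identical, so I would let $S$, $I$, $R$ be the probabilities that a fixed consumer $j$ is susceptible, infected, or recovered, and let $\IS$ and $\SaS$ be the probabilities that the ordered pair $(j-1,j)$ is in state $(I,S)$ or $(S,S)$. The single-site balance equations are immediate from the rates \eqref{LinearAdoptionRates_1D_onesided} and \eqref{eq:recover-rate}: consumer $j$ leaves S at the expected flux $pS+q\IS$ and moves from I to R at rate $r$. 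This already gives $S'=-pS-q\IS$, $I'=pS+q\IS-rI$ and $R'=rI$ exactly, i.e.\ three of the four equations, with the stated initial data since nobody is infected at $t=0$.

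The only equation that is not self-contained is the one for $\IS$. I would derive its balance law by listing the transitions of the block $(j-1,j)$: the state $(I,S)$ is destroyed when $j-1$ recovers (rate $r$) or when $j$ adopts (rate $p+q$, since the infected left neighbor contributes the full internal rate $q$), and it is created only out of $(S,S)$ when $j-1$ adopts. As the adoption rate of $j-1$ is itself $p+q\mathbf 1_{\{I_{j-2}\}}$, this creation term involves the pair $\SaS$ and the triple $\ISS$, producing the exact but unclosed equation ${\IS}'=p\SaS+q\ISS-(p+q+r)\IS$.

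The heart of the proof is the two closure identities $\SaS=e^{-pt}S$ and $\ISS=e^{-pt}\IS$, which are special to the one-sided network. I would establish them with a clock representation: attach to $j$ an external Poisson clock of rate $p$ and an internal rate-$q$ clock that runs only while $j-1$ is infected. The key observation is that if $j-1$ is susceptible at time $t$ then $j-1$ was never infected on $[0,t]$, so $j$'s internal clock never ran and $j$ is susceptible precisely when its external clock has not rung. Since that external clock is independent of the entire history of the sites to the left of $j$, conditioning on $\{S_{j-1}\}$, respectively $\{I_{j-2},S_{j-1}\}$, simply multiplies the probability by the independent factor $e^{-pt}$; together with translation invariance ($P(I_{j-2},S_{j-1})=\IS$) this yields the two identities. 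Substituting them into the unclosed equation gives exactly ${\IS}'=pe^{-pt}S+(qe^{-pt}-p-q-r)\IS$, closing the system.

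The step that genuinely needs $M\to\infty$, and which I expect to be the main obstacle, is the independence used in the closure. On a finite circle the left history of $j-1$ wraps all the way around and ultimately depends on $j$ itself, so $j$'s external clock is only approximately independent of $\{S_{j-1}\}$; this dependence must travel the remaining $M-1$ sites, and its effect by a fixed time $t$ is negligible only in the limit. I would therefore carry out the closure argument on the semi-infinite one-sided line, where the strictly left-to-right flow of influence makes the independence exact, and then justify the passage to the limit by a correlation-decay estimate bounding the wrap-around contribution by the probability of a chain of $M-1$ successive adoptions within $[0,t]$, which tends to $0$ as $M\to\infty$.
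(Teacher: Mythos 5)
Your proof is correct, and at the decisive step it takes a genuinely different route from the paper. The paper never isolates just the pair and triple probabilities: it derives the full infinite hierarchy~\eqref{eq:1D-SI} for the block probabilities $\Sk$ and $\ISk$ (your unclosed equation ${\IS}'=p\,\SaS+q\,\ISS-(p+q+r)\IS$ is its $k=1$ member, cf.~\eqref{eq:ISk}), and then closes it by the separable substitution~\eqref{eq:Sk=exp(-kpt)*x}, $\Sk=e^{-kpt}x(t)$, $\ISk=e^{-kpt}y(t)$, which is simply verified to be consistent with the hierarchy and the initial data, a footnote admitting that the reduction fails for general initial conditions. Your closure identities $\SaS=e^{-pt}S$ and $\ISS=e^{-pt}\,\IS$ are exactly the $k=2$ instances of that substitution, but you \emph{prove} them as exact probabilistic identities via the clock construction: the event $\{j-1 \mbox{ susceptible at } t\}$ forces $j$'s internal clock never to have run and is measurable with respect to the clocks of sites to the left of~$j$, so conditioning merely multiplies by the independent external-survival factor $e^{-pt}$. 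This buys two things the paper leaves implicit: a conceptual explanation of \emph{why} the ansatz works only for the all-susceptible initial state (the factor $e^{-pt}$ is precisely the survival probability of $j$'s independent external clock), and an honest account of the $M\to\infty$ limit --- the paper writes the hierarchy as if the network were already infinite, whereas your wrap-around estimate (a causal chain of $M-1$ adoptions within $[0,t]$ has probability $O\bigl(((p+q)t)^{M-1}/(M-1)!\bigr)$) quantifies the finite-circle error. What the paper's hierarchy buys in exchange is reusable machinery: the $\Sk$, $\ISk$ equations and the exponential substitution are exactly what drive the two-sided model (Lemma~\ref{lem:SIR-1D-two_sided}) and the comparison proof of Lemma~\ref{lem:two_sided>one_sided}, where the unclosed $k$-level structure is where the inequality gets inserted, so your leaner closure would not by itself support those later arguments. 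Two small repairs to your sketch: carry out the limiting construction on the doubly infinite line rather than the semi-infinite one, so that translation invariance ($P(I_{j-2},S_{j-1})=\IS$ for every $j$) is exact rather than perturbed by the left boundary site; and state explicitly, as the lemma tacitly assumes, that the ODEs govern expected fractions (or add a concentration step if you want the random fraction $f(t)$ itself to converge).
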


Here, {\IS} denotes the fraction of pairs where the left consumer is infected and  
the right consumer is susceptible.\footnote{Or equivalently, for any~$j$, the probability that 
$j$~is infected and $j+1$ is susceptible.}
Thus, $\IS \not= I \cdot S$. 
The dynamics is determined by eqs.~\eqref{eq:1D_SIR-a} for~${S}$ and ${\IS}$. Once these 2~equations are solved,  
$R$ and $I$ can be recovered from eqs.~\eqref{eq:1D_SIR-b}.

\proof 
We modify the analysis in~\cite[Section~2]{OR-10}, as follows.
Let~$(S_{k})$ denote a sequence of~$k$
adjacent non-adopters, let~$(I S_{k})$ denote a sequence of a single contagious adopter and $k$ non-adopters,
 and let~$(R S_{k})$ denote a sequence of a single recovered adopter and $k$ non-adopters,
i.e.,
 \[
(S_{k})=(\underbrace{S\ldots S}_{k\text{
times}}),
 \qquad  (I S_{k})=(I \underbrace{S\ldots S}_{k\text{
times}}),
 \qquad  (R S_{k})=(R \underbrace{S\ldots S}_{k\text{
times}}),
\] 
and let $\Sk$, $\ISk$, and~$\RSk$ denote the probabilities of these configurations at time~$t$.

A configuration $(S_{k})$ cannot be
created, as the only possible transformation is $(S) \to (I)$.
A configuration $(S_{k})$ is destroyed if:
 \begin{enumerate} 
\item Any of the rightmost $k-1$ '$S$'s turns into an~'$I$',
which happens at a rate of~$p$. 
\item A configuration $(SS_{k})$ transforms into the
configuration $(SIS_{k-1})$, which happens at a rate
of~$p$. 
\item A configuration $(IS_{k})$ transforms into the configuration $(IIS_{k-1})$,
which happens at a rate of~$p+q$.
\item A configuration $(RS_{k})$ transforms into the
configuration $(RIS_{k-1})$, which happens at a rate
of~$p$. 
 \end{enumerate}
Therefore, the equation for $\Sk$ is 
$$
\Sk'(t)=-(k-1) p\Sk-p \Skplusone-(p+q)\ISk -p \RSk, \qquad k = 1,2, \dots
$$
Since 
\begin{equation}
\label{eq:S+I+R}
 \SSk+\ISk + \RSk = \Sk, 
\end{equation}
the last equation reads 
\begin{subequations}
\label{eq:1D-SI}
\begin{equation}
\label{eq:1D-S_k}
\Sk'(t)= -kp \Sk-q \ISk , \qquad k = 1,2, \dots
\end{equation}
The motivation for~\eqref{eq:1D-S_k} is as follows. Any~$S$ can change to~$I$ at the rate~$p$. Therefore, the overall rate of change due to external effects 
is~$kp \Sk$.
In addition, the leftmost~$S$ can change to~$I$ due to internal effects, if his left neighbor is an~I.  
Therefore, the overall rate of change due to external effects is~$q \ISk $.

A configuration $(IS_{k})$ is created from $(SSS_{k})$ at a rate~$p$,
from $(ISS_{k})$ at a rate~$p+q$ and from $(RSS_{k})$ at a rate~$p$.
A configuration $(IS_{k})$ is destroyed:
 \begin{enumerate} 
\item When any of the rightmost $k-1$ '$S$'s turns into an~'$I$',
which happens at a rate of~$p$. 
\item When the left~$S$ changes to~$I$ at a rate
of~$p+q$. 
\item When the $I$ changes to an~$R$ at a rate of~$r$.
 \end{enumerate}
Therefore, the equation for $I S_k$ is 
$$
\ISk'(t) = p \Skplustwo +(p+q) \ISkplusone + p \RSkplusone  -\big((k-1)p +(p+q) +r\big) \ISk , \qquad k = 1,2, \dots
$$
Therefore, by~\eqref{eq:S+I+R},
\begin{equation}
  \label{eq:ISk}
\ISk '(t) = p  \Skplusone+ q  \ISkplusone  -(kp +q +r) \ISk , \qquad k = 1,2, \dots
\end{equation}
The motivation for~\eqref{eq:ISk} is as follows. $(IS_k)$ are created from~$(SS_k)$ at a rate of~$p \SSk$ due to external effects,
 and $q  \ISSk$ due to internal effects. 
Any~$S$ can change to~$I$ at the rate~$p$. Therefore, the overall rate of change due to external effects 
is~$k p \ISk$. The leftmost~$S$ can change to~$I$ due to internal effects
at the rate of $q \ISk$. The~$I$ can change to~$R$ at the rate of~$r \ISk$.

 Since there are no adopters at $t=0$, the initial conditions are
\begin{equation}
   \label{eq:ISk-ic}
\Sk(t=0) = 1, \quad \ISk(t=0) = 0, \qquad k = 1,2, \dots
\end{equation}
\end{subequations}
Therefore, the dynamics is governed by~\eqref{eq:1D-SI}.
This infinite system can be reduced to two coupled ODEs via the 
substitution\footnote{This reduction is not possible for general initial conditions. ``Fortunately'', 
it is possible for the initial conditions~\eqref{eq:ISk-ic}, which follow from~\eqref{eq:IC-SIR-general}.} 
\begin{equation}
 \label{eq:Sk=exp(-kpt)*x} 
  \Sk = e^{-k p t} x(t), \qquad 
  \ISk = e^{-k p t} y(t), \qquad k = 1,2, \dots
\end{equation}
Indeed, the equations for $x$ and $y$ read 
\begin{subequations}
\label{eq:dot-x_ab}
\begin{equation}
\label{eq:dot-x}
  x' = -q y, \qquad y' =  p e^{-pt}x +(qe^{-pt} - q - r) y,
\end{equation}
subject to 
\begin{equation}
   \label{eq:ic-xy}
    x(0) = 1, \qquad y(0) = 0.
\end{equation}
\end{subequations}

The equation for $S'$ follows from~\eqref{eq:1D-S_k} with $k=1$. 
By~\eqref{eq:ISk} with $k=1$ and~\eqref{eq:Sk=exp(-kpt)*x},
$$
{\IS}'(t) = p \SaS+ q \ISS  -(p +q +r)\IS  = 
p e^{-pt}S+ q e^{-pt}\IS  -(p +q +r)\IS.
$$
The equation for $I'$ is {\em not} given by~\eqref{eq:ISk} with $k=0$.\footnote{This  because when $k=0$, there is no ``left~$S$ that changes to~$I$ at a rate
of~$p+q$''.} Rather, a derivation similar to that of shows that $I' = p S+ q \IS  -rI$.
Finally, since $S+I+ R = 1$, then  $R' = -S'-I'$. 
\endproof

The one-sided 1D Bass-SIR model~\eqref{eq:1D_SIR} ``identifies'' with the nonspatial Bass-SIR model~\eqref{eq:Bass_SIR} if one  makes the approximation 
$\IS \approx I \cdot S$. This mean-field approximation, however, is very inaccurate,
 especially when $q \gg p$~\cite{OR-10}. 
Indeed, the diffusion dynamics in these models can be quite different (see, e.g., 
Figs.~\ref{fig:cellular_nonspatial_1D_2D_3D_main} and~\ref{fig:T_half}B). 

Unlike the nonspatial Bass-SIR model~\eqref{eq:Bass_SIR}, the one-sided 1D Bass-SIR model~\eqref{eq:1D_SIR} is {\em linear}. In fact, 
 we can solve it explicitly: 
\begin{lemma}
\label{lem:1D-1-sided}
Consider the discrete Bass-SIR model~\eqref{eq:discrete-Bass-SIR} on
a one-sided 1D network. 
 Then $\lim_{M \to \infty}f(t) = f_{\rm 1D}^{\rm one-sided}(t)$, 
 where
\begin{subequations}
   \label{eq:f-1D-ab}
\begin{eqnarray}
   \label{eq:f-1D}
  f_{\rm 1D}^{\rm one-sided} (t) &=& 1-
e^{-(r+q+p)t +\frac{q}{p} (1-e^{-pt})} 
   \left( 1+ r  \int_0^{t} e^{(r+q) \tau -\frac{q}{p} (1-e^{-p\tau})} \, d\tau \right)
\\
 &=& 1-e^{-pt}+e^{-pt -g(t)} q \int_0^{t} e^{-r(t-\tau)} e^{g(\tau)}(1-e^{-p\tau}) \, d\tau,
\label{eq:f1D-alternate}
\end{eqnarray}
\end{subequations}
and $ g(t) = qt -\frac{q}{p} (1-e^{-pt})$.
\end{lemma}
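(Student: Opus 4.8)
The plan is to exploit the fact that $f = I + R = 1 - S$, so it suffices to determine $S(t)$; and by Lemma~\ref{lem:SIR-1D} the pair $(S,\IS)$ already obeys the closed $2\times 2$ linear system~\eqref{eq:1D_SIR-a}. Rather than work with $(S,\IS)$ directly, I would use the substitution $S = e^{-pt}x(t)$, $\IS = e^{-pt}y(t)$ introduced in the proof of Lemma~\ref{lem:SIR-1D}, under which the dynamics becomes the cleaner system~\eqref{eq:dot-x} for $(x,y)$ with $x(0)=1$, $y(0)=0$. The goal is then to collapse this system to a single scalar ODE for $x$, solve it in closed form, and read off $f = 1 - e^{-pt}x$.

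The key step is to eliminate $y$. Differentiating $x' = -qy$ and substituting the equation for $y'$ produces a second-order linear ODE for $x$ alone. The decisive observation is that, with $g(t) = qt - \frac{q}{p}(1-e^{-pt})$, one has $g'(t) = q(1-e^{-pt})$ and $g''(t) = pq\,e^{-pt}$, so that the variable coefficients of this ODE are exactly $g'+r$ and $g''$:
\begin{equation*}
x'' + \bigl(g'(t)+r\bigr)\,x' + g''(t)\,x = 0.
\end{equation*}
I would then recognize the left-hand side as $\frac{d}{dt}\!\left(x'+g'x\right) + r\,x'$, which is the crux of the argument, since it renders the equation integrable once.

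Integrating this identity from $0$ to $t$ and using the initial data $x(0)=1$, $x'(0)=-qy(0)=0$, and $g'(0)=0$ telescopes the second-order equation into the \emph{first}-order linear ODE
\begin{equation*}
x' + \bigl(g'(t)+r\bigr)\,x = r, \qquad x(0) = 1.
\end{equation*}
This is solved immediately by the integrating factor $e^{g(t)+rt}$, giving $x(t) = e^{-(g(t)+rt)}\bigl(1 + r\int_0^t e^{g(\tau)+r\tau}\,d\tau\bigr)$. Substituting $S = e^{-pt}x$ into $f = 1-S$ and writing $g$ out explicitly reproduces~\eqref{eq:f-1D}. The alternate form~\eqref{eq:f1D-alternate} then follows by integrating $r\int_0^t e^{r\tau+g(\tau)}\,d\tau$ by parts through $r\,e^{r\tau+g(\tau)} = \frac{d}{d\tau}e^{r\tau+g(\tau)} - g'(\tau)\,e^{r\tau+g(\tau)}$ together with $g'(\tau) = q(1-e^{-p\tau})$.

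I expect the only genuine obstacle to be spotting the special structure in the second paragraph: that the particular $g$ forced by the coefficient $q e^{-pt}$ makes the $x$-coefficient of the second-order ODE equal to $g''$, so that the operator factors as an exact derivative plus $r\,x'$ and thus admits a single integration down to first order. Once this is seen, every remaining step — the reduction to second order, the single integration, the integrating-factor solve, and the integration by parts for the second form — is entirely routine.
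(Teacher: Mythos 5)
Your proposal is correct and follows essentially the same route as the paper's proof: the paper likewise reduces the $(x,y)$ system to $\ddot{x} = -(r+q)x' + q(e^{-pt}x)'$, exploits the exact-derivative structure (your $\frac{d}{dt}(x'+g'x)+rx'$ in the paper's notation) to integrate once to the first-order equation $x' = -(r+q)x + qe^{-pt}x + r$, solves it by the same integrating factor, and obtains~\eqref{eq:f1D-alternate} by the identical integration by parts with $g'(\tau)=q(1-e^{-p\tau})$. The only difference is cosmetic: you phrase the key cancellation through $g$, $g'$, $g''$ from the outset, while the paper works with the explicit coefficient $qe^{-pt}$ and introduces $g$ only for the second form.
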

\begin{proof} 
By~\eqref{eq:dot-x_ab},
\begin{eqnarray*}
  \ddot{x} &=& - q y' = - q \left( -(r+q) y +p e^{-pt} x + qe^{-pt} y\right)
=   -(r+q)  x' -qp e^{-pt} x + qe^{-pt}  x'
\\ &=&  -(r+q)  x' +q (e^{-pt} x)',
\end{eqnarray*}
and
$
x'(0) = -q y(0) = 0.
$
Integrating, one obtains   
$
  x'  = -(r+q) x + q e^{-pt} x +r.
$
We can rewrite this as 
$  x' -h(t) x = r$, where $h(t) = -(r+q)  + q e^{-pt}$. 
The solution of this first-order linear ODE is
$
  x(t) = e^{\int_0^t h(s) \, ds} \left( 1+ r  \int_0^t e^{-\int_0^{\tau} h(s) \, ds} \, d\tau \right),
$
where
$
e^{\int_0^t h(s) \, ds} = 
e^{\int_0^t [-(r+q)  + q e^{-ps}] \, ds} = 
e^{-(r+q)t +\frac{q}{p} (1-e^{-pt})}. 
$
And so, by~\eqref{eq:Sk=exp(-kpt)*x},
\begin{equation}
\label{eq:f_1D_1_sided_derivation}
  f_{\rm 1D}^{\rm one-sided}(t) = 1-S(t) = 1-e^{-pt} x(t),
\end{equation}
which proves~\eqref{eq:f-1D}.

 We can rewrite 
$
  f_{\rm 1D}^{\rm one-sided}(t) =  1-
e^{-rt-pt -g(t)} 
   \left( 1+ r  \int_0^{t} e^{r\tau +g(\tau)} \, d\tau \right).
$
 Now,
\begin{eqnarray*}
1+r  \int_0^{t} e^{r\tau +g(\tau)} \, d\tau  &=& 
1+  \int_0^{t} (e^{r\tau})' e^{g(\tau)} \, d\tau  = 
\\ &=& 
1+  [e^{r\tau} e^{g(\tau)}]_0^t -\int_0^{t} e^{r\tau} e^{g(\tau)}g'(\tau) \, d\tau 
= e^{r t} e^{g(t)} -\int_0^{t} e^{r\tau} e^{g(\tau)}g'(\tau) \, d\tau .
\end{eqnarray*}
 Hence,
\begin{eqnarray*}
  f_{\rm 1D}^{\rm one-sided}(t) &=&  1-
e^{-rt-pt -g(t)} 
   \left( e^{r t} e^{g(t)} -\int_0^{t} e^{r\tau} e^{g(\tau)}g'(\tau) \, d\tau  \right)
\\ &=& 1-e^{-pt}+e^{-rt-pt -g(t)} \int_0^{t} e^{r\tau} e^{g(\tau)}g'(\tau) \, d\tau, \qquad  g' = q(1-e^{-pt}),
\end{eqnarray*}
which proves~\eqref{eq:f1D-alternate}.

\end{proof}

As expected,
\begin{enumerate}
   \item When $r=0$, $f_{\rm 1D}^{\rm one-sided}=1-
e^{-(q+p)t +\frac{q}{p} (1-e^{-pt})} $, which is the expression derived in~\cite{OR-10}. 
  \item When $q=0$, $f_{\rm 1D}^{\rm one-sided} = f^{\rm ext}$, in agreement with~\eqref{eq:Sf_ext}.
 \item  $f_{\rm 1D}^{\rm one-sided}(t)$ is monotonically-decreasing in~$r$,\footnote{This follows from~\eqref{eq:f1D-alternate}, since $g(t)$ is independent of~$r$.} in agreement with~\eqref{eq:df_dr<0}.
\end{enumerate} 

\subsection{Two-sided 1D network}

In the two-sided 1D~network, $M$~consumers are located on a circle,  
and each consumer is
influenced by his left and right neighbors. 
Since $k_j = 2$, relation~\eqref{LinearAdoptionRates-new} reads  
\begin{equation}
\label{LinearAdoptionRates_1D_twosided}
\text{Prob}{j~\text{adopts in}\choose {(t,t+\Delta t)}}= \left(p+q \frac{i_j(t)}{2} \right)\Delta t+o(\Delta t) 
\qquad \mbox{as $\Delta t\to0$},
\end{equation}
where $i_j(t)=2$ if both $j-1$ and $j+1$ are contagious adopters at time~$t$,
$i_j(t)=1$ if only one of them is contagious at time~$t$,
 and $i_j(t) = 0$ otherwise. 
A priori, capturing the diffusion dynamics requires writing an ordinary differential equation
for the dynamics of each of the $3^M$ possible configurations. As $M \to \infty$, however, this infinite system can be
reduced to a system of 5~linear ordinary differential equations:

\begin{lemma}
\label{lem:SIR-1D-two_sided}
Consider the discrete Bass-SIR model~\eqref{eq:discrete-Bass-SIR} on
a two-sided 1D~network.   As $M \to \infty$, 
the diffusion dynamics is governed by the {\bf two-sided 1D Bass-SIR model}
\begin{subequations}
\label{eq:1D_SIR_two_sided}
\begin{eqnarray}
\nonumber
S'(t) &=& -p S -  q \mbox{\IS},  \\
{\IS}'(t) &=& p e^{-pt}S+ \left(\frac{q}{2} e^{-pt}-p -\frac{q}{2} -r\right)\IS-\frac{q}{2} \ISI,
\label{eq:1D_SIR_two_sided-a}
\\ 
{\ISI}'(t) &=& 2p e^{-pt}\IS+ \left(qe^{-pt}  - p-q-2r\right) \ISI,
\nonumber
\end{eqnarray}
and
\begin{equation}
\label{eq:1D_SIR-2-sided-b}
I'(t) = p S+ q \IS  -rI, \qquad
R'(t) = rI,
\end{equation}
subject to 
\begin{equation}
S(0) = 1, \qquad \IS(0)= \ISI(0) = I(0) = R(0)=  0.
\end{equation}
\end{subequations}
\end{lemma}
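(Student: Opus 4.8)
The plan is to follow the derivation of the one-sided model in Lemma~\ref{lem:SIR-1D}, but now tracking \emph{three} families of spatial configurations instead of two. Because in the two-sided network a susceptible is influenced by both its left and its right neighbor, the fate of a block of $k$ consecutive non-adopters depends on the states of the sites bordering it on \emph{both} sides. Accordingly, in addition to $\Sk$ and $\ISk$ I would introduce $\ISkI$, the probability of the configuration $(I\,\underbrace{S\cdots S}_{k}\,I)$ in which a block of $k$ susceptibles is flanked by contagious adopters on both ends. This $\ISkI$ family is the genuinely new ingredient absent from the one-sided case, and it is what ultimately produces the extra coupling term $-\frac{q}{2}\ISI$ in~\eqref{eq:1D_SIR_two_sided-a}.

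Next I would write the master equations for these probabilities by the same creation/destruction bookkeeping used in the one-sided proof. For $\Sk$, each of the $k$ susceptibles adopts externally at rate $p$, while the two end-susceptibles additionally feel internal influence at rate $\frac{q}{2}$ whenever their outside neighbor is contagious; conditioning on those outside neighbors via $\SSk+\ISk+\RSk=\Sk$ and its right-sided analogue, and using the left--right symmetry of the dynamics to identify $\mathrm{prob}(\SkI)=\ISk$, one finds $\Sk'=-kp\Sk-q\ISk$ — the same equation as in the one-sided case~\eqref{eq:1D-S_k}, since the single rate-$q$ term now splits as $\frac q2+\frac q2$ over the two ends. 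For $\ISk$ the rightmost susceptible couples to the outside neighbor on the right, so conditioning there produces the term $-\frac q2\ISkI$ in addition to the one-sided terms, giving $\ISk'=p\Skplusone+\frac q2\ISkplusone-(kp+\frac q2+r)\ISk-\frac q2\ISkI$. A parallel count for $\ISkI$, whose two bracketing adopters are created from $\ISkplusone$ and each recover at rate $r$, yields $\ISkI'=2p\ISkplusone+q\ISkplusoneI-(kp+q+2r)\ISkI$.

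With the three infinite hierarchies in hand, I would close them exactly as in~\eqref{eq:Sk=exp(-kpt)*x}, substituting $\Sk=e^{-kpt}x(t)$, $\ISk=e^{-kpt}y(t)$, $\ISkI=e^{-kpt}z(t)$. The factors $e^{-kpt}$ absorb every $k$-dependent coefficient, collapsing the hierarchies into a closed autonomous system for $(x,y,z)$ with $x(0)=1$, $y(0)=z(0)=0$. Evaluating the probabilities at $k=1$ gives $S=e^{-pt}x$, $\IS=e^{-pt}y$, $\ISI=e^{-pt}z$, and translating the $(x,y,z)$ system back reproduces~\eqref{eq:1D_SIR_two_sided-a} line by line. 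Finally, exactly as in the one-sided proof, the equation for $I$ is \emph{not} obtained by setting $k=0$; instead I would argue directly that the total internal adoption rate equals $q\IS$ (each ordered susceptible--infected adjacency contributes $\frac q2$, and by symmetry the $(S,I)$ and $(I,S)$ adjacencies each have fraction $\IS$), so that $I'=pS+q\IS-rI$ and $R'=rI$. The initial conditions follow from~\eqref{eq:IC-SIR-general}.

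The main obstacle is the careful bookkeeping at the block endpoints, and in particular the degenerate case $k=1$, where the single susceptible is simultaneously the left and the right end of the block and therefore feels internal influence from \emph{two} outside neighbors at once. I expect the naive leftmost/rightmost counting to require a separate, more careful tally when $k=1$ (conditioning on both outside neighbors and using $\ISS+\ISI+\mathrm{prob}(\ISR)=\IS$), and the key check will be that this degenerate count nonetheless collapses onto the same general formulas, so that the substitution $\Sk=e^{-kpt}x$ remains valid all the way down to $k=1$. A secondary point to justify carefully is the left--right symmetry used to equate $\mathrm{prob}(\SkI)$ with $\ISk$; this is where the symmetric two-sided dynamics and the symmetric initial data~\eqref{eq:IC-SIR-general} are essential.
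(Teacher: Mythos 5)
Your proposal is correct and follows essentially the same route as the paper: the same three configuration hierarchies $\Sk$, $\ISk$, $\ISkI$ with the same master equations, closed by the same exponential ansatz (your normalization $\Sk=e^{-kpt}x$ is equivalent to the paper's $\Sk=e^{-p(k-1)t}S$, and is in fact the one the paper uses in the proof of Lemma~\ref{lem:two_sided>one_sided}), the same left--right symmetry identification $\ISk=\SkI$, and the same direct derivation of $I'=pS+q\IS-rI$ rather than setting $k=0$; your extra care at the endpoint case $k=1$ is warranted and does collapse onto the general formulas by additivity of the rates, a point the paper leaves implicit. One trivial wording slip: the reduced $(x,y,z)$ system is closed but \emph{not} autonomous, since its coefficients contain $e^{-pt}$.
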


Here, {\ISI} denotes the fraction of triplets where the right and left consumers are infected and the center consumer
is susceptible. 
Thus, $\ISI \not= I \cdot S \cdot I$. 
The dynamics is determined by eqs.~\eqref{eq:1D_SIR_two_sided-a} 
for~${S}$, ${\IS}$, and~${\ISI}$.  Once these three equations are solved,  
$R$ and $I$ can be recovered from~\eqref{eq:1D_SIR-2-sided-b}.

\begin{proof} 
The dynamics of $\Sk$ is governed by  
\begin{subequations}
\label{eq:1D-sided}
\begin{equation}
\label{eq:1D-S_k-2-sided}
\Sk'(t)= -kp\Sk-\frac{q}{2}\ISk -\frac{q}{2} \SkI, \qquad k = 1,2, \dots
\end{equation}
Indeed, any of the~$S$ can change to~$I$ at the rate~$p$. Therefore, the overall rate of change due to external effects is $kp\Sk$. 
In addition, the leftmost~$S$ can change to~$I$ due to internal effects at the rate of~$\frac{q}{2}\ISk $, 
and the rightmost~$S$ can change to~$I$ due to internal effects at the rate of~$\frac{q}{2} \SkI$. 
Since by symmetry $\ISk  =  \SkI$, equation~\eqref{eq:1D-S_k-2-sided} is equivalent to~\eqref{eq:1D-S_k}. 

The equation for $\ISk$ is 
\begin{equation}
  \label{eq:ISk-2-sided}
\ISk '(t) = p  \Skplusone+ \frac{q}{2} \ISSk  -\left(kp +\frac{q}{2} +r\right)\ISk- \frac{q}{2} \ISkI , \qquad k = 1,2, \dots
\end{equation}
Indeed, $(IS_k)$ are created from $(SS_k)$ at a rate of~$p \SSk$ due to external effects, and 
of~$\frac{q}2 \ISSk$ due to internal effects. 
Any of the~$S$ can change to~$I$ at the rate~$p$. Therefore, the overall rate of change due to external effects 
is~$k p \ISk$. The leftmost~$S$ can also change to~$I$ due to internal effects
at the rate of~$\frac{q}{2} \ISk$.
 The~$I$ can change to~$R$ at the rate of~$r \ISk$. Finally, 
the rightmost~$S$ changes to~$I$ due to internal effects
at the rate of~$\frac{q}{2} \ISkI$.
Similar arguments show that 
\begin{equation}
\ISkI '(t) = p (\ISkplusone+ \SkplusoneI)+ \frac{q}{2}(\ISkplusoneI+ \ISkplusoneI) 
    -\left(kp +q +2r\right) \ISkI , \quad k = 1,2, \dots
\end{equation}
\end{subequations}

Under the substitution
$$
\Sk = e^{-p(k-1)t} S, \quad 
\ISk = e^{-p(k-1)t} \IS, \quad 
\ISkI = e^{-p(k-1)t} \ISI, \qquad k=1,2, \dots
$$
and using the symmetry $\ISk = \SkI$, 
the infinite system~\eqref{eq:1D-sided} reduces to the equations for 
$S'$, ${\IS}'$ and ${\ISI}'$ in~\eqref{eq:1D_SIR_two_sided}.
 Similar arguments show that 
 the equation for $I'$ reads
$$
I' = p S+ \frac{q}{2} \IS+ \frac{q}{2} \SI  -rI.
$$
Since \IS = \SI, we get the equation for $I'$ in~\eqref{eq:1D_SIR-2-sided-b}.
\end{proof}

\subsection{$f_{\rm 1D}^{\rm one-sided}(t)< f_{\rm 1D}^{\rm two-sided}(t)$}
\label{sec:one-side>two-side}

In~\cite{OR-10}, Fibich and Gibori showed that
when $r=0$, the diffusion curves in the one-sided and two-sided 1D~models are identical,
i.e., 
$$
f_{\rm 1D}^{\rm one-sided}(t;p,q)\equiv f_{\rm 1D}^{\rm two-sided}(t;p,q).
$$
Intuitively, this is because external adoptions are independent of the network structure, and 
internal adoptions occur through the expansion of 1D~clusters (chains) of adopters. 
Since the internal effect of such a chain in~$(t, t+\Delta t)$ is~$q\Delta t$ in the one-sided model and  
$\frac{q}{2}\Delta t+\frac{q}{2}\Delta t$ in the two-sided model, 
the rates of internal adoptions are identical in both cases. Hence, 
the diffusion curves are also identical.

The above argument suggests that the diffusion curves in the one-sided and two-sided 1D~models 
should remain identical when adopters are allowed to recover. Surprisingly, however,
\begin{lemma}
 \label{lem:two_sided>one_sided}
When $r>0$, diffusion in the one-sided model is strictly slower than in the two-sided model, i.e.,
$$
  f_{\rm 1D}^{\rm one-sided}(t;p,q,r)< f_{\rm 1D}^{\rm two-sided}(t;p,q,r), \qquad t>0.
$$
\end{lemma}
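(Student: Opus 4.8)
The plan is to collapse the two-curve comparison into a single scalar inequality. Both reduced models share the output map $f=1-S$, the same balance law $f'=pS+q\,\IS$ (noted after Lemmas~\ref{lem:SIR-1D} and~\ref{lem:SIR-1D-two_sided}), and the \emph{identical} equation $S'=-pS-q\,\IS$ in~\eqref{eq:1D_SIR} and~\eqref{eq:1D_SIR_two_sided}. Hence the whole discrepancy is carried by the pair density, and it suffices to show that $\Delta(t):=S^{\mathrm{os}}(t)-S^{\mathrm{ts}}(t)>0$ for $t>0$, where $S^{\mathrm{os}},\IS^{\mathrm{os}}$ solve~\eqref{eq:1D_SIR} and $S^{\mathrm{ts}},\IS^{\mathrm{ts}},\ISI^{\mathrm{ts}}$ solve~\eqref{eq:1D_SIR_two_sided}; indeed $\Delta>0$ is exactly $f_{\rm 1D}^{\rm two-sided}>f_{\rm 1D}^{\rm one-sided}$.

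First I would subtract the two $\IS$-equations. Writing $E:=\IS^{\mathrm{os}}-\IS^{\mathrm{ts}}$, a short calculation using $(qe^{-pt}-q)-(\tfrac q2 e^{-pt}-\tfrac q2)=\tfrac q2(1-e^{-pt})$ gives $\Delta'=-p\Delta-qE$ and $E'=pe^{-pt}\Delta+(qe^{-pt}-p-q-r)E+\tfrac q2\big[\ISI^{\mathrm{ts}}-(1-e^{-pt})\IS^{\mathrm{ts}}\big]$, with $\Delta(0)=E(0)=0$. Eliminating $E$ yields the closed scalar problem
\begin{equation}
\Delta''+\big(2p+q+r-qe^{-pt}\big)\Delta'+p(p+q+r)\Delta=\tfrac{q^2}{2}\,\phi(t),\qquad \Delta(0)=\Delta'(0)=0,
\end{equation}
where $\phi:=(1-e^{-pt})\IS^{\mathrm{ts}}-\ISI^{\mathrm{ts}}$. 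The point of this reduction is that the homogeneous operator here is precisely the one obtained by eliminating $\IS$ from the one-sided system~\eqref{eq:1D_SIR-a}, so $S^{\mathrm{os}}(t)$ is itself a homogeneous solution; and by Lemma~\ref{lem:1D-1-sided}, $S^{\mathrm{os}}(t)=e^{-pt}x(t)>0$ for all $t$ (see~\eqref{eq:f_1D_1_sided_derivation}). A second-order linear equation possessing a solution that is strictly positive on all of $[0,\infty)$ is disconjugate there, so the causal Green's function $K(t,s)$ of the initial-value problem is strictly positive for $t>s$. Consequently $\Delta(t)=\tfrac{q^2}{2}\int_0^t K(t,s)\,\phi(s)\,ds$, and the whole lemma is reduced to the single inequality
\begin{equation}
\ISI^{\mathrm{ts}}(t)<(1-e^{-pt})\,\IS^{\mathrm{ts}}(t),\qquad t>0. \tag{$\star$}
\end{equation}

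I expect $(\star)$ to be the main obstacle. It is an inequality internal to the two-sided model, and it is genuinely delicate: when $r=0$ one has $f_{\rm 1D}^{\rm one-sided}\equiv f_{\rm 1D}^{\rm two-sided}$, forcing $\Delta\equiv0$ and hence the exact identity $\ISI^{\mathrm{ts}}=(1-e^{-pt})\IS^{\mathrm{ts}}$, so $(\star)$ degenerates to an equality at $r=0$. A Taylor expansion at $t=0$ shows that the gap opens only at third order, $\phi(t)=\tfrac12\,r p^2 t^3+O(t^4)$, which means that every leading-order, term-by-term estimate cancels and the sign is decided only after this cancellation. I would attack $(\star)$ in one of two ways: either derive the evolution equation for $\phi$ from~\eqref{eq:1D_SIR_two_sided-a}, exploiting the elementary bounds $0\le\ISI^{\mathrm{ts}}\le\IS^{\mathrm{ts}}$ together with the positivity of $S^{\mathrm{ts}}$ and $\IS^{\mathrm{ts}}$, and try to close a differential inequality $\phi'\ge c(t)\phi$ with $\phi(0)=0$; or, since $\phi\equiv0$ at $r=0$, differentiate the entire two-sided system with respect to $r$ and establish $\partial_r\phi\ge0$, recovering $\phi=\int_0^r\partial_{r'}\phi\,dr'\ge0$. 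The first route must contend with the cancellation noted above, so the monotonicity-in-$r$ route looks the more robust.

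The conceptual reason $(\star)$ is strict for $r>0$ is the redundancy that recovery introduces on the two-sided network, and this is also what drives the final conclusion. In an $\ISI$ triplet the susceptible center keeps receiving internal pressure from one neighbour even after the other recovers, whereas in the one-sided chain the single upstream adopter is a single point of failure; this is exactly why the actual triplet density $\ISI^{\mathrm{ts}}$ falls strictly below the ``independent'' value $(1-e^{-pt})\IS^{\mathrm{ts}}$, making $\phi>0$ and, through the positive Green's function, $\Delta>0$, i.e.\ $f_{\rm 1D}^{\rm one-sided}<f_{\rm 1D}^{\rm two-sided}$ for $t>0$.
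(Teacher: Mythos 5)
Your reduction is sound as far as it goes: the computation $\Delta'=-p\Delta-qE$, the elimination of $E$ to the second-order equation with forcing $\tfrac{q^2}{2}\phi$, the observation that the homogeneous operator is the one obtained by eliminating $\IS$ from~\eqref{eq:1D_SIR-a} so that $S^{\rm os}=e^{-pt}x>0$ is a positive homogeneous solution, and the resulting positivity of the causal kernel via disconjugacy are all correct. But the argument stops exactly where the real work lies: the inequality $(\star)$, $\ISI^{\rm ts}<(1-e^{-pt})\,\IS^{\rm ts}$, is identified as ``the main obstacle'' and then only attacked with two unexecuted strategies. Neither is a proof: the differential-inequality route collides with the cubic cancellation you yourself computed, and the monotonicity-in-$r$ route requires a positivity result for the $r$-derivative of a nonautonomous three-dimensional system that is not obviously easier than the lemma itself. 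As written, this is a conditional proof, conditioned on its hardest step.

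The irony is that $(\star)$ is precisely the pivot of the paper's proof, and there it costs one line, because your forcing term has an exact probabilistic meaning that dissolves the apparent delicacy. Conditioning the pair $(I,S)$ on the state of the susceptible site's right neighbor gives the decomposition $\IS=\ISS+\ISI+\ISR$, and the exponential ansatz underlying Lemma~\ref{lem:SIR-1D-two_sided} (the two-sided analogue of~\eqref{eq:Sk=exp(-kpt)*x}, namely $\ISk=e^{-p(k-1)t}\,\IS$) gives $\ISS=e^{-pt}\,\IS$. Hence $\phi=(1-e^{-pt})\IS-\ISI=\ISR$, which is strictly positive for $t>0$ once $r>0$, and identically zero when $r=0$ --- exactly the degeneracy you observed; the $O(t^3)$ vanishing of $\phi$ simply counts the three events (two adoptions, one recovery) needed to create an $(I,S,R)$ triplet. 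With $\phi=\ISR>0$ in hand, the paper also finishes more cheaply than you propose: instead of Green's functions, it substitutes $z<(1-e^{-pt})y$ into~\eqref{eq:dot-xyz} to obtain the differential inequality $y'>pe^{-pt}x+(qe^{-pt}-q-r)y$, and then repeats the explicit one-sided integration of~\eqref{eq:dot-x_ab} with equalities replaced by inequalities, concluding $x^{\rm ts}(t)<x^{\rm os}(t)$ directly as in~\eqref{eq:f_1D_1_sided_derivation}. Your disconjugacy argument is a correct, if heavier, substitute for that last step; the genuine gap is the unproven $(\star)$, and the missing idea is its identification with the configuration probability $\ISR$.
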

\begin{proof} 
 Under the substitutions~\eqref{eq:Sk=exp(-kpt)*x}  and $\ISkI = e^{- k p t} z(t)$,
 the infinite system~\eqref{eq:1D-sided} reduces to 
\begin{subequations}
\label{eq:dot-xyz}
\begin{equation}
  x' = -q y, \quad y' =  p e^{-pt}x +(\frac{q}{2}e^{-pt} - \frac{q}{2}- r) y -\frac{q}{2} z,
\quad z' =2 p e^{-pt}y+ q e^{-pt}z  - (q+2r) z,
\end{equation}
subject to 
\begin{equation}
    x(0) = 1, \qquad y(0) = 0, \qquad z(0) = 0.
\end{equation}
\end{subequations}

Since $\IS = \ISS+\ISI+\ISR$ and $\ISR>0$ for $t>0$, then
$\ISI <\IS- \ISS$, or equivalently 
$
   e^{-p t} z<e^{-p t} y-e^{-2 p t} y.
$
Therefore, the solution of~\eqref{eq:dot-xyz} satisfies 
\begin{equation}
  x' = -q y, \quad y'  >  p e^{-pt}x +(qe^{-pt} - q- r) y,   \qquad t>0,
\end{equation}
subject to~\eqref{eq:ic-xy}. 
We now follow the derivation of~\eqref{eq:f_1D_1_sided_derivation} from~\eqref{eq:dot-x_ab}, but replace the equality sign with an inequality wherever needed. Thus, we get that
the solution~\eqref{eq:dot-xyz} satisfies
$  \ddot{x} = - q y' <  -(r+q)  x' +q (e^{-pt} x)'.
$
Integrating, one obtains  
$
  x'  < -(r+q) x + q e^{-pt} x +r.
$
We can rewrite this as 
$
  x' -h(t) x< r.
$
Integrating again, one obtains
$$
  x(t)< e^{\int_0^t h(s) \, ds} \left( 1+ r  \int_0^t e^{-\int_0^{\tau} h(s) \, ds} \, d\tau \right).
$$
Hence,
\begin{eqnarray*}
  f_{\rm 1D}^{\rm two-sided}(t) &=& 1-S(t) = 1-e^{-pt} x(t)>  1-e^{-pt}   e^{\int_0^t h(s) \, ds} \left( 1+ r  \int_0^t e^{-\int_0^{\tau} h(s) \, ds} \, d\tau \right) 
 \\ &=&  f_{\rm 1D}^{\rm one-sided}(t).
\end{eqnarray*}

\end{proof}

Intuitively, once recovery occurs, the periodic 1D network is broken into several non-periodic 1D networks that do not communicate with each other. 
As we will show elsewhere, on 1D networks which are not periodic, diffusion in the two-sided case is strictly faster than in the one-sided case,
thus explaining Lemma~\ref{lem:two_sided>one_sided}.
Finally, we note that the difference between the one-sided and two-sided models is quite small (Figures~\ref{fig:cellular_1D_main}A 
and~\ref{fig:T_half}).

\subsection{Simulations}
 
\begin{figure}[ht!]
\begin{center}
\scalebox{0.7}{\includegraphics{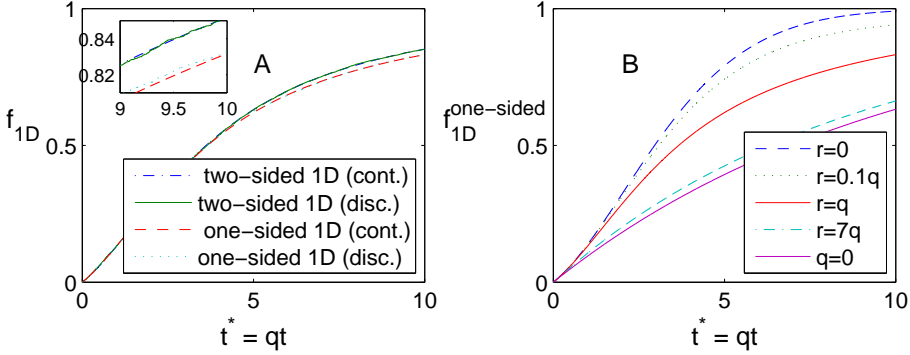}}
\caption{Fraction of adopters on 1D~networks with $p=0.01$ and  $q= 0.1$. 
A)~Agreement between [a single  simulation of] the discrete Bass-SIR model in a one-sided 1D~network [dots] 
and the continuous one-sided 1D Bass-SIR model~\eqref{eq:1D_SIR} [dashes], and between [a single
simulation of] the discrete Bass-SIR model on a two-sided 1D~network [solid]
and the continuous two-sided 1D Bass-SIR model~\eqref{eq:1D_SIR_two_sided} [dash-dots].
 Here $r=0.05$ and $M=10,000$.  
B)~The continuous one-sided 1D Bass-SIR model~\eqref{eq:1D_SIR} with various values of~$r$. 
 Here $q=0$ is~$f^{\rm ext}$, see~\eqref{eq:Sf_ext}.
}
\label{fig:cellular_1D_main}
\end{center}
\end{figure}

   Figure~\ref{fig:cellular_1D_main}A 
confirms the agreement as $M \to \infty$ between the one-sided 1D ABM  
and the one-sided 1D Bass-SIR model, see Lemma~\ref{lem:SIR-1D},
and  between the two-sided 1D ABM 
and the two-sided 1D Bass-SIR model, 
see Lemma~\ref{lem:SIR-1D-two_sided} (the agreement is clearest in the inset). 
The diffusion in the one-sided case is (slightly) slower than in the two-sided case,
in agreement with Lemma~\ref{lem:two_sided>one_sided}.
Additional numerical support that $f_{\rm 1D}^{\rm one-sided}< f_{\rm 1D}^{\rm two-sided}$ is given in
Figure~\ref{fig:T_half}.

Figure~\ref{fig:cellular_1D_main}B shows the dependence of~$f_{\rm 1D}^{\rm one-sided}(t)$ on~$r$. 
As predicted in Sections~\ref{sec:dim-analysis} and~\ref{sec:One-sided-1D},
\begin{itemize}
  \item If $r \ll q$, diffusion  is similar to that for $r=0$.
  \item $f_{\rm 1D}^{\rm one-sided}(t;r)$ is monotonically decreasing in~$r$.
  \item If $r \gg q$, diffusion is  similar to that in the absence of internal effects.
\end{itemize}

\section{Lower and upper bounds}
   \label{sec:dimensionality}

In the case of a nonspatial (complete) network where 
all $M$~consumers are connected to each other, then 
$k_j = M-1$ and $i_j(t) = M \cdot I(t)$, and so relation~\eqref{LinearAdoptionRates-new} reads  
\begin{equation}
   \label{eq:Pj-nonspatial}
\text{Prob}{j~\text{adopts in}\choose {(t,t+\Delta t)}}= \left(p+q \frac{M}{M-1}I(t)\right)\Delta t+o(\Delta t) 
\qquad \mbox{as $\Delta t\to0$}.
\end{equation}
As $M \to \infty$, the aggregate diffusion 
 is governed 
by the {\bf nonspatial Bass-SIR model}~\cite{Bass-SIR-model-16} 
\begin{subequations}
\label{eq:Bass_SIR}
\begin{equation}
\label{eq:Bass_SIR_eq}
 S'(t) = -S (p + q I), \qquad I'(t) = S (p + q I)-r I, \qquad  R'(t) = r I,
\end{equation}
\begin{equation}
\label{eq:Bass_SIR_ic}
S(0) = 1, \quad I(0) = 0, \quad R(0)  = 0.
\end{equation}
\end{subequations}
If $r=0$, then $R=0$ and $f = 1-S$, and so 
 eqs.~\eqref{eq:Bass_SIR} reduce 
to the Bass model~\cite{Bass-69}
\begin{equation}
\label{eq:f'_Bass}
   f'(t) = (1-f)(p + q f), \qquad f(0)=0.
\end{equation}
The solution of~\eqref{eq:f'_Bass} is given by the well-known Bass formula
$
f_{\rm Bass}(t) = \frac{1-e^{-(p+q)t}}{1+(q/p)e^{-(p+q)t}}.
$
 There is no explicit solution of~\eqref{eq:Bass_SIR} for $r>0$.

The 1D and nonspatial cases are the least- and most-connected networks, respectively. Therefore, it was conjectured in~\cite{OR-10} that in the absence of recoveries, for ``any'' network, the fraction of adopters is bounded by
$
f_{\rm 1D}(t;p,q)<    f(t;p,q)< f_{\rm Bass}(t;p,q).
$
Since, however, 
in the case of recovering consumers  $f_{\rm 1D}^{\rm one-sided}(t)< f_{\rm 1D}^{\rm two-sided}(t)$, see Lemma~\ref{lem:two_sided>one_sided}, 
we modify this conjecture as follows:
\begin{conj}
  \label{conj-upper_lower_bounds}
 Consider the discrete Bass-SIR model~\eqref{eq:discrete-Bass-SIR} 
on any connected network. 
As $M \to \infty$, 
the fraction of adopters is bounded by
$$
f_{\rm 1D}^{\rm one-sided}(t;p,q,r)<    f(t;p,q,r)< f_{\rm nonspatial}(t;p,q,r),
$$
where $f_{\rm nonspatial} = 1-S$ and $S$ is the solution of~\eqref{eq:Bass_SIR}.
\end{conj}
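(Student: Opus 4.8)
The plan is to prove the two one-sided inequalities separately, in each case by reducing the comparison to a single scalar ``edge term'' and then running an ODE comparison argument. Writing $S_G(t)$ for the expected susceptible fraction of the process on a connected network $G$ and summing the per-consumer hazards in \eqref{LinearAdoptionRates-new}, the exact (unclosed) evolution is
\begin{equation*}
  S_G'(t) = -p\,S_G(t) - q\,\Phi_G(t), \qquad
  \Phi_G(t) := \frac{1}{M}\sum_{j} E\!\left[\frac{i_j(t)}{k_j}\,\mathbf 1_{\{j \text{ susceptible}\}}\right].
\end{equation*}
The quantity $\Phi_G$ is the $1/k_j$-weighted density of $S$--$I$ links. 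It specializes exactly to $\IS$ on the one-sided ring (where $k_j=1$), recovering the first equation of \eqref{eq:1D_SIR-a}, and to the mean-field product $S\cdot I$ on the complete network, recovering \eqref{eq:Bass_SIR_eq}. Thus the conjecture is, at heart, a statement that interpolating the network between these extremes interpolates $\Phi_G$ monotonically through the resulting $(S,I)$ trajectories.

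For the upper bound $f_G < f_{\rm nonspatial}$, equivalently $S_G > S_{\rm nonspatial}$, the nonspatial model is precisely the first-order closure $\Phi=S\cdot I$. The governing heuristic is that on a sparse network the infected sites cluster, so a typical susceptible sees a below-average fraction of infected neighbors; one therefore expects the negative-correlation estimate $\Phi_G(t) \le S_G(t)\,I_G(t)$ to hold for all $t$. Granting this, the internal infection pressure felt by $S_G$ never exceeds its mean-field value, and I would propagate this through the $I$-equation and invoke a comparison theorem for the monotone coupled system $(S_G,I_G)$ versus $(S_{\rm nonspatial},I_{\rm nonspatial})$ to conclude $S_G \ge S_{\rm nonspatial}$. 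The two genuine inputs are the correlation inequality itself, and the comparison of the \emph{two-component} system (since $I_G$ and $I_{\rm nonspatial}$ must be controlled simultaneously, not $S$ alone).

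For the lower bound $f_{\rm 1D}^{\rm one\text{-}sided} < f_G$ I would instead attempt a monotone stochastic coupling of the process on $G$ with the process on the one-sided ring, realized on a common probability space so that each site adopts no later in $G$ than in the ring, whence $S_G(t)\le S_{\rm ring}(t)$ pointwise and in expectation. Because external adoptions occur at the network-independent rate $p$, they couple trivially; the entire content is to couple the internal adoptions so that the per-site internal hazard $q\,i_j/k_j$ on $G$ dominates the single-neighbor hazard on the ring. One should also note, in view of Lemma~\ref{lem:two_sided>one_sided}, that the one-sided ring is conjectured to be the minimizer even among $1$D networks, so the coupling must single it out as the extremal case.

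The main obstacle---and the reason this is stated only as a conjecture---is twofold. First, the degree normalization $1/k_j$ in \eqref{LinearAdoptionRates-new} destroys edge-monotonicity: adding a connection both supplies a new potential influencer and dilutes the weight of the existing ones, so neither the internal hazard nor $\Phi_G$ is monotone under edge addition, and no naive coupling or FKG-type argument applies. Second, and more seriously, recovery ($r>0$) makes the interacting particle system \emph{non-attractive}: the transition $I\to R$ is not monotone in the adoption configuration, which is exactly the feature that defeats both the coupling needed for the lower bound and the correlation inequality needed for the upper bound. The realistic route is therefore to first settle the recovery-free case $r=0$, where adoption is irreversible, the dynamics is monotone, and the standard coupling and correlation machinery is available, and then to lift the inequalities to $r>0$ by continuation in $r$. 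For the lift one can exploit the structural facts already established here: the dynamics depends on $r$ only through $r/q$ (Section~\ref{sec:dim-analysis}); $f$ is monotonically decreasing in $r$; and as $r\to\infty$ every network collapses to the network-independent curve $f^{\rm ext}(t;p)$, so the two bounds pinch together at the upper endpoint. Anchoring a perturbation argument at $r=0$ and $r=\infty$ and propagating the strict inequalities across the intervening range of $r$ is, I expect, the crux of any eventual proof.
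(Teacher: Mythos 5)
The first thing to say is that the statement you were asked to prove is, in the paper, explicitly a \emph{conjecture}: the paper offers no proof, only numerical support (Figure~\ref{fig:cellular_nonspatial_1D_2D_3D_main}) and a single rigorously proved instance of the lower bound, namely the two-sided ring (Lemma~\ref{lem:two_sided>one_sided}). Your proposal, which openly stops short of a complete argument, is therefore in the correct epistemic position, and your structural framing is sound: the unclosed edge term $\Phi_G$ does specialize to $\IS$ on the one-sided ring and to $S\cdot I$ on the complete network, and the two obstructions you identify --- that the $1/k_j$ normalization destroys edge-monotonicity (this is precisely the paper's remark that ``the result of Conjecture~\ref{conj-upper_lower_bounds} is not immediate, since as we add links, the weight of each link goes down''), and that recovery makes the particle system non-attractive --- are exactly the reasons the statement remains open.

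As a proof, however, each load-bearing step of your plan has a genuine gap beyond the two inputs you flag. For the upper bound, even granting the correlation inequality $\Phi_G \le S_G I_G$ (itself unproven, and delicate because of the $1/k_j$ weights), the ``comparison theorem for the monotone coupled system'' you invoke does not exist: for the field in~\eqref{eq:Bass_SIR_eq} the off-diagonal signs are $\partial_I S' = -qS \le 0$ and $\partial_S I' = p+qI \ge 0$, so the system is not quasimonotone with respect to any orthant cone, and two-trajectory comparison fails precisely where you need simultaneous control of $I_G$ and $I_{\rm nonspatial}$ (with $r>0$, $I$ is non-monotone in time and $R$ feeds back). For the lower bound, it is instructive that the paper's one proved case does \emph{not} couple: Lemma~\ref{lem:two_sided>one_sided} works inside the exact moment hierarchy, using $\ISI < \IS - \ISS$ (because $\ISR>0$) to close a scalar differential inequality that is then integrated against the one-sided solution --- suggesting that hierarchy comparison, not stochastic coupling, is the viable mechanism. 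Finally, the proposed lift by ``continuation in $r$'' cannot work as stated: continuity at the anchors yields the inequalities only for $r$ sufficiently small or large; at $r=\infty$ the bounds coincide, so strictness nearby would require derivative information; Figure~\ref{fig:T_half}B shows the gap between the bounds is non-monotone in $r$, so there is no monotone structure to propagate across intermediate $r$; and the $r=0$ anchor is itself only the (unproven) conjecture of~\cite{OR-10}, so the base case of your continuation is as open as the statement you are trying to prove.
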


The result of Conjecture~\ref{conj-upper_lower_bounds} is not immediate, since as we add links, the weight of each link goes down (see discussion in Section~\ref{sec:review}).
The lower bound was proved in Lemma~\ref{lem:two_sided>one_sided} for the case of the two-sided 1D network. 
In Figure~\ref{fig:cellular_nonspatial_1D_2D_3D_main} we compute the diffusion numerically for periodic $D$-dimensional Cartesian networks, where each 
node is connected to its $2D$ nearest nodes and 
$\text{Prob}{j~\text{adopts in}\choose {(t,t+\Delta t)}}= \left(p+q \frac{i_j(t)}{2D}\right) \Delta t$, see~\eqref{LinearAdoptionRates-new}.   
The diffusion in the~2D and 3D~cases is indeed faster than 
in the one-sided 1D~model but slower than in the nonspatial model, in agreement with Conjecture~\ref{conj-upper_lower_bounds}. 
The differences among the four networks decrease with~$r$, in agreement with~\eqref{eq:df1-f2_dr<0}.
In~\cite{Bass-SIR-model-16} it was observed numerically that diffusion in scale-free networks in similar, if not identical, 
to that on Cartesian grids, and that a small-worlds structure has a negligible effect on the diffusion. This suggests, therefore, that Conjecture~\ref{conj-upper_lower_bounds} holds for scale-free and small-worlds networks.

\begin{figure}[ht!]
\begin{center}
\scalebox{0.5}{\includegraphics{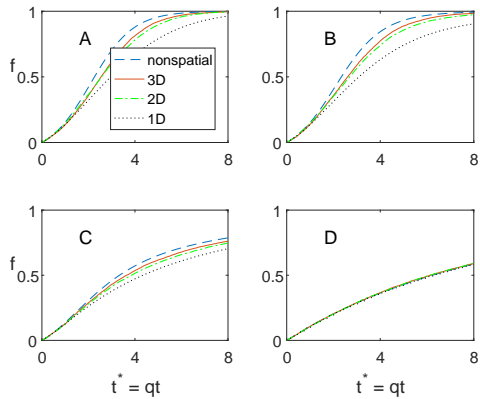}}
\caption{Fractional adoption  in one-sided 1D (dots) ,2D (dash-dot), 3D (solid), and nonspatial (dashes) networks. 
Here $p=0.01$, $q= 0.1$, and~$M=10,000$. A)~$r=0$. B)~$r=0.01$. C)~$r=0.1$. D)~$r=0.7$. 
Figure taken from~\cite{Bass-SIR-model-16}.}
\label{fig:cellular_nonspatial_1D_2D_3D_main}
\end{center}
\end{figure}

   A useful measure for comparing the diffusion in different networks is the {\em market half-life time}
 $T_{1/2} := f^{-1}(1/2)$, 
i.e., the time for half of the population to adopt. In the absence of internal effects we have that $f = f^{\rm ext}$, see~\eqref{eq:Sf_ext}, and so 
$
T_{1/2}^{\rm ext} = \frac{\ln 2}{p}.
$
By Conjecture~\ref{conj-upper_lower_bounds}, for any network with~$p$, $q$, and~$r$, 
\begin{equation}
\label{eq:ul_bounds_T1/2}
 T_{1/2}^{\rm one-sided~1D} >     T_{1/2} >  T_{1/2}^{\rm nonspatial}.
\end{equation}
Figure~\ref{fig:T_half}A shows that~\eqref{eq:ul_bounds_T1/2} indeed holds for the 
 two-sided 1D, 2D, and 3D Cartesian networks.
In addition, for all networks:
\begin{enumerate}
  \item $T_{1/2}$ is monotonically increasing in~$r$, in agreement with~\eqref{eq:df_dr<0}.
    \item 
$
T_{1/2} \to T_{1/2}^{\rm ext}$ as ${r}/{q} \to \infty$,
 since internal effects disappear in the limit (see Section~\ref{sec:dim-analysis}).
\end{enumerate}

In Figure~\ref{fig:T_half}B we plot the ratio of the upper and lower bounds in~\eqref{eq:ul_bounds_T1/2}. Surprisingly, this ratio initially increases with~$r$, and only later decreases monotonically to zero as $r/q \to \infty$. In particular,
\begin{obsrv}
   \label{observe}
When~$r$ is of a comparable magnitude to~$q$, recovery increases the dependence of~$T_{1/2}$ on the network structure.
\end{obsrv}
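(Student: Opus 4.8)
The plan is to recast Observation~\ref{observe} as a statement about the non-monotonicity in $r$ of a measure of network dependence, for instance the gap between the bounds in~\eqref{eq:ul_bounds_T1/2},
\[
  \mathcal{D}(r) := T_{1/2}^{\rm one-sided~1D}(r) - T_{1/2}^{\rm nonspatial}(r)
\]
(the relative gap, i.e.\ the ratio minus one plotted in Figure~\ref{fig:T_half}B, behaves identically). Two endpoint facts are already in hand. At $r=0$ the one-sided 1D and nonspatial curves are genuinely different, so $\mathcal{D}(0)>0$. As $r/q\to\infty$ the dimensionless analysis of Section~\ref{sec:dim-analysis} shows that $f\to f^{\rm ext}$ for every network, whence $T_{1/2}\to T_{1/2}^{\rm ext}=\tfrac{\ln 2}{p}$ and $\mathcal{D}(r)\to 0$. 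The entire content of the Observation is therefore the single inequality $\mathcal{D}'(0)>0$: together with $\mathcal{D}(0)>0$ and $\mathcal{D}(\infty)=0$ it forces $\mathcal{D}$ to rise before it decays, so that the network dependence is strictly larger at moderate $r$ than at $r=0$.

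To access $\mathcal{D}'(0)$ I would differentiate the defining relation $f(T_{1/2}(r);r)=\tfrac12$ implicitly, obtaining
\[
  \frac{d}{dr}T_{1/2}(r) = -\,\frac{f_r}{f_t}\Big|_{t=T_{1/2}(r)},
\]
where $f_t>0$ and, by the monotonicity~\eqref{eq:df_dr<0}, $f_r<0$, so each $T_{1/2}$ is increasing in $r$ (recovering the known monotonicity). The Observation then reduces to comparing the two sensitivities $-f_r/f_t$ at $r=0$, evaluated at the respective half-life times $T_{1/2}^{\rm one-sided~1D}(0)$ and $T_{1/2}^{\rm nonspatial}(0)$: I must show that the one-sided 1D half-life responds more strongly to recovery than the nonspatial one.

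The two ingredients $f_r|_{r=0}$ are obtained differently. For the one-sided 1D network I differentiate the closed form~\eqref{eq:f1D-alternate} in $r$; since $r$ enters only through the factor $e^{-r(t-\tau)}$ and $g$ is $r$-independent, this gives the clean expression $f_r|_{r=0}=-\,e^{-pt-g(t)}q\int_0^t (t-\tau)\,e^{g(\tau)}(1-e^{-p\tau})\,d\tau<0$. For the nonspatial network there is no closed form, so I linearize~\eqref{eq:Bass_SIR} about the Bass solution: writing $S=S_0+rS_1+\cdots$, $I=I_0+rI_1+\cdots$ with $I_0=f_{\rm Bass}$ and $S_0=1-f_{\rm Bass}$, the first-order terms solve a linear inhomogeneous system forced by $f_{\rm Bass}$, which can be integrated using the factor $1/S_0=(1-f_{\rm Bass})^{-1}$ (because $p+qI_0=-(\ln S_0)'$). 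This yields $f_r|_{r=0}^{\rm nonspatial}=-S_1$ as an explicit quadrature of $f_{\rm Bass}$, while $f_t$ in both cases is read off from~\eqref{eq:f'_Bass} and from~\eqref{eq:f1D-alternate}.

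The hard part will be the final sign: after assembling both sides, $\mathcal{D}'(0)>0$ becomes an inequality between a closed-form one-sided 1D quantity and an integral functional of $f_{\rm Bass}$, evaluated moreover at two \emph{different} times $T_{1/2}^{\rm one-sided~1D}(0)>T_{1/2}^{\rm nonspatial}(0)$. I do not expect a one-line sign argument; the realistic route is to exploit the marketing-relevant regime $p\ll q$, in which $T_{1/2}^{\rm ext}=\tfrac{\ln 2}{p}$ is large and both half-lives and both sensitivities admit matched expansions in $p/q$, reducing the comparison to a leading-order inequality. To corroborate that the maximum of $\mathcal{D}$ sits where $r$ is comparable to $q$ — rather than merely somewhere in $(0,\infty)$ — I would combine this small-$r$ increase with a large-$r$ expansion in $q/r$ toward the purely-external limit~\eqref{eq:Sf_ext}, and locate the turning point through the single dimensionless variable $r/q$ identified in Section~\ref{sec:dim-analysis}, matching Figure~\ref{fig:T_half}B.
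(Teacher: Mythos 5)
You should first be clear about what the paper actually does with this statement: Observation~\ref{observe} is not proved analytically at all. It is an empirical claim read off Figure~\ref{fig:T_half}B (the ratio $T_{1/2}^{\rm one-sided~1D}/T_{1/2}^{\rm nonspatial}$ first increases with~$r$ and only later decays as $r/q\to\infty$), corroborated by Figure~\ref{fig:T_half}C for the one-sided/two-sided pair, with a footnote reconciling it with the shrinking vertical gaps in Figure~\ref{fig:cellular_nonspatial_1D_2D_3D_main}. Your program is therefore strictly more ambitious than the paper's, and much of its scaffolding is sound: $\mathcal{D}(0)>0$; $\mathcal{D}(\infty)=0$ by the $r/q$ scaling of Section~\ref{sec:dim-analysis}; the implicit-differentiation formula $dT_{1/2}/dr=-f_r/f_t$; your closed-form $f_r|_{r=0}$ for the one-sided 1D model is correct, since in~\eqref{eq:f1D-alternate} the parameter~$r$ enters only through $e^{-r(t-\tau)}$ and $g$ is $r$-independent; and the nonspatial linearization about $f_{\rm Bass}$ is indeed solvable in quadratures (with $u=S_1/S_0$ one gets a first-order linear ODE for~$u$ forced by $\int_0^t f_{\rm Bass}$).

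The genuine gap is that the decisive step is never carried out: the sign of $\mathcal{D}'(0)$ --- a comparison of the two sensitivities $-f_r/f_t$ evaluated at two \emph{different} half-life times --- is exactly the content of the Observation, and you defer it to matched asymptotics in $p/q$ that you do not perform, offering no mechanism for why the least-connected network's half-life should respond more strongly to recovery. Two further soft spots. First, your parenthetical claim that the difference $\mathcal{D}$ and the ratio plotted in Figure~\ref{fig:T_half}B ``behave identically'' is unjustified: writing $T_1=T_{1/2}^{\rm one-sided~1D}$ and $T_2=T_{1/2}^{\rm nonspatial}$, the condition $\mathcal{D}'(0)>0$ reads $T_1'(0)>T_2'(0)$, while the ratio increases at $r=0$ iff $T_1'(0)/T_1(0)>T_2'(0)/T_2(0)$; since $T_1(0)>T_2(0)$, neither inequality implies the other, so you must commit to one normalization and prove the corresponding inequality. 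Second, even granting $\mathcal{D}'(0)>0$, you obtain only that the network dependence is larger for \emph{small} $r>0$ than at $r=0$; the Observation asserts this for~$r$ comparable to~$q$, and by the rescaling $t^*=qt$ the location of the maximum in $r/q$ is a function of $p/q$ that must be shown to be $O(1)$ --- the turning-point analysis you only gesture at. As it stands, your proposal reduces a numerically supported statement to an unproven inequality rather than proving it; to match the paper's standard of evidence it suffices to compute the curves of Figure~\ref{fig:T_half}B--C directly (the 1D bound from the explicit formula~\eqref{eq:f-1D-ab}, the nonspatial bound by integrating~\eqref{eq:Bass_SIR}), whereas upgrading the Observation to a lemma requires completing the $p\ll q$ expansions and the sign comparison at their core.
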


This observation also follows from Figure~\ref{fig:T_half}C, where we plot the ratio of the half-life times 
for the  one-sided and two-sided 1D~models. In that case, however, the maximal difference between 
the two models is~1.5\%.~\footnote{Observation~\ref{observe} may seem to contradict with Figure~\ref{fig:cellular_nonspatial_1D_2D_3D_main} that shows that
the differences among the four models decrease with~$r$.  
A closer inspection of Figure~\ref{fig:cellular_nonspatial_1D_2D_3D_main} shows that the vertical distances
between the four curves (i.e., the differences in~$f$ for a given~$t$) 
indeed decrease monotonically in~$r$. The horizontal distances
between the four curves (i.e., the differences in~$t$ for a given~$f$), however,  
initially increase with~$r$, because the curves become less steep. 
 }

\begin{figure}[ht!]
\begin{center}
\scalebox{0.7}{\includegraphics{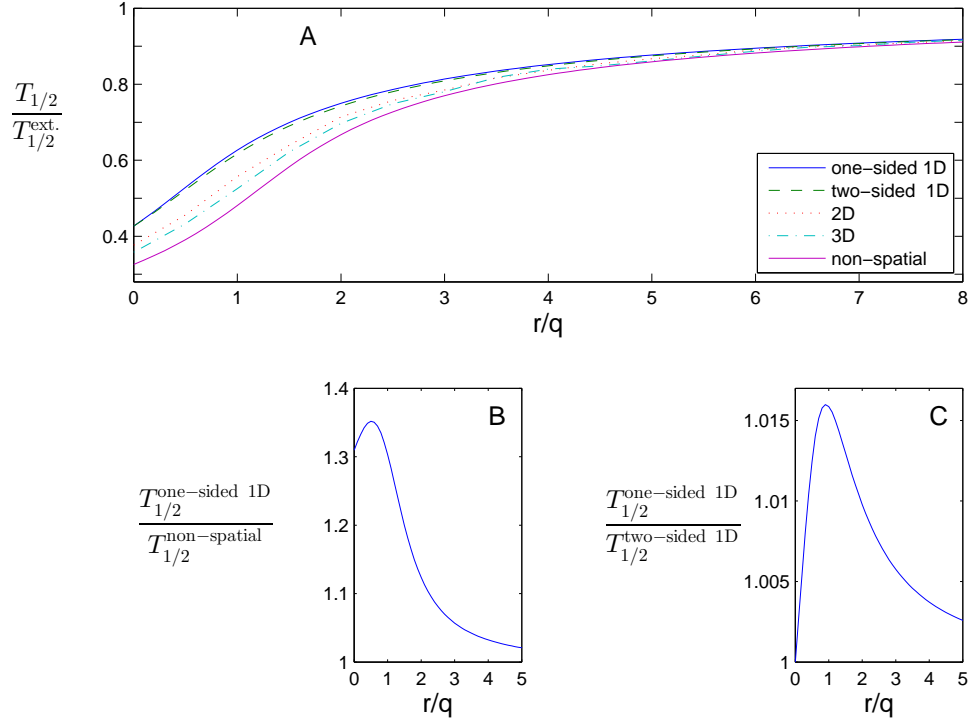}}
\caption{A)~The market lalf-time $T_{1/2}$, normalized by~$T_{1/2}^{\rm ext}$, as a function of~$r/q$, in the one-sided 1D~model (solid), two-sided 1D~model (dashes), 2D~model (dots), 3D~model (dash-dots), and the nonspatial model (solid).
Here $p = 0.01$ and $q = 0.1$. B)~The ratio $\frac{T_{1/2}^{\rm one-sided~1D}}{T_{1/2}^{\rm nonspatial}}$. 
C)~The ratio $\frac{T_{1/2}^{\rm one-sided~1D}}{T_{1/2}^{\rm two-sided~1D}}$.  }
\label{fig:T_half}
\end{center}
\end{figure}

\section{Heterogeneous consumers}
\label{sec:heterogeneous}

   So far we assumed that consumers are homogeneous, namely, they have the same~$p$, $q$, and~$r$. 
While this assumption is convenient for the analysis, a more realistic assumption is that 
consumer~$j$ has its own~$p_j$, $q_j$, and~$r_j$, i.e.,
\begin{subequations}
\label{eq:Bass-SIR-hetero}
\begin{equation}
\text{Prob}{j~\text{adopts in}\choose {(t,t+\Delta t)}}= \left(p_j+q_j \frac{i_j(t)}{k_j}\right) \Delta t+o(\Delta t) 
\qquad \mbox{as $\Delta t\to0$},
\end{equation}
and
\begin{equation}
\text{Prob}\{j\text{ recovers in }(t,t+\Delta t)\} = r_j \Delta t+o(\Delta t)
\qquad \mbox{as $\Delta t\to0$}.
\end{equation}
\end{subequations}

 Simulations with heterogeneous non-recovering consumers
on nonspatial networks and on periodic 1D and~2D Cartesian networks showed that 
heterogeneity has a minor effect~\cite{GLM-01,OR-10}. By this, we mean that the 
diffusion in the heterogeneous case was close to that in the homogeneous case with 
$\bar{p} =  \frac1M\sum_{j=1}^n p_j$ and $\bar{q} =  \frac1M\sum_{j=1}^n q_j$, even when the level of heterogeneity 
was significant. This small effect of heterogeneity  was explained in~\cite{PNAS-12} to be a consequence of 
the {\em averaging principle for heterogeneous  models}.
 Exactly the same arguments imply that 
heterogeneity  has a small effect when consumers are allowed to recover:
\begin{lemma}
Consider the heterogeneous Bass-SIR model~\eqref{eq:Bass-SIR-hetero} on
a vertex-transitive network~\footnote{A graph $G$ is called {\em vertex-transitive} if the ``view'' from any vertex is identical, 
i.e., if for given any two vertices $v_1$ and $v_2$ of~$G$, there is some automorphism $f:V(G) \rightarrow  V(G)$ such that $f(v_1) = v_2$. For example, a nonspatial network and periodic $d$-dimensional Cartesian networks are vertex-transitive.}.
Then the adoption curve satisfies
\label{lem:hetero}
$$
  f(t;p_1, \dots, p_M,q_1, \dots, q_M,r_1, \dots, r_M)  =  f(t;\bar{p},\bar{q},\bar{r})\Big(1+ O(\sigma_p^2,\sigma_q^2,\sigma_r^2)\Big),
$$
where~$\{\bar{p}, \bar{q}, \bar{r}\}$ and~$\{\sigma_p, \sigma_q, \sigma_r\}$ are the mean and standard deviation (``level of heterogeneity'') 
of~$\{p_j\}_{j=1}^M$, $\{q_j\}_{j=1}^M$, and~$\{r_j\}_{j=1}^M$, respectively.
\end{lemma}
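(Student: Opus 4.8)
The plan is to follow the averaging principle for heterogeneous models of~\cite{PNAS-12}, now extended to include the recovery parameter~$r$. The essential idea is that heterogeneity enters the aggregate adoption curve only at \emph{second} order, because the first-order contribution is annihilated by the symmetry of a vertex-transitive network. To set this up I would write each parameter as a deviation from its mean,
\[
p_j = \bar p + \delta p_j, \qquad q_j = \bar q + \delta q_j, \qquad r_j = \bar r + \delta r_j,
\]
so that by the definition of the mean $\sum_{j=1}^M \delta p_j = \sum_{j=1}^M \delta q_j = \sum_{j=1}^M \delta r_j = 0$, while $\sigma_p^2 = \frac1M\sum_j (\delta p_j)^2$, and similarly for $\sigma_q,\sigma_r$. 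For a fixed network of $M$ consumers, $f(t)$ is the average over consumers of the single-node adoption probabilities $g_j(t) := \text{Prob}\{j\text{ has adopted by }t\}$, each a coordinate of the solution of the governing CTMC (equivalently, of the Kolmogorov ODEs for the $3^M$ configurations). Since the rates in~\eqref{eq:Bass-SIR-hetero} depend polynomially, hence analytically, on the parameters, both $g_j$ and $f$ are analytic functions of the full vector $(p_1,\dots,r_M)$ for each fixed~$t$.

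Next I would Taylor expand the map $(p_1,\dots,r_M)\mapsto f(t)$ about the homogeneous point where every $p_j=\bar p$, $q_j=\bar q$, $r_j=\bar r$. The zeroth-order term is exactly the homogeneous curve $f(t;\bar p,\bar q,\bar r)$, and the first-order term is
\[
\sum_{j=1}^M \frac{\partial f}{\partial p_j}\Big|_0 \delta p_j
+ \sum_{j=1}^M \frac{\partial f}{\partial q_j}\Big|_0 \delta q_j
+ \sum_{j=1}^M \frac{\partial f}{\partial r_j}\Big|_0 \delta r_j .
\]
The key step is to use vertex-transitivity to show each of these sums vanishes. Fix vertices $j_1,j_2$ and an automorphism $\phi$ of the network with $\phi(j_1)=j_2$. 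At the homogeneous point all nodes carry identical parameters, so relabeling the consumers by $\phi$ leaves the entire CTMC, and hence the aggregate $f$, invariant, while it turns a perturbation of $p_{j_1}$ into the identical perturbation of $p_{j_2}$. Differentiating this invariance in the perturbation direction forces the sensitivity of the aggregate to be the same at every vertex: $\frac{\partial f}{\partial p_{j}}\big|_0 \equiv c_p(t)$ independent of~$j$, and likewise $\frac{\partial f}{\partial q_j}\big|_0 \equiv c_q(t)$ and $\frac{\partial f}{\partial r_j}\big|_0 \equiv c_r(t)$. The first-order term therefore collapses to $c_p\sum_j\delta p_j + c_q\sum_j\delta q_j + c_r\sum_j\delta r_j = 0$.

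With the linear term gone, the Taylor remainder is controlled by the second derivatives contracted against the deviations, which are bounded by $\sum_j(\delta p_j)^2 = M\sigma_p^2$ and the analogous quadratic quantities in $q$ and $r$. Hence $f(t;p_1,\dots,r_M) = f(t;\bar p,\bar q,\bar r) + O(\sigma_p^2,\sigma_q^2,\sigma_r^2)$ for each fixed $t$; dividing by the homogeneous curve, which is positive and bounded away from $0$ on any compact $t$-interval with $t>0$, converts this into the multiplicative form stated in the lemma.

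I expect the main obstacle to be the rigorous justification of the vertex-transitivity step, namely translating the graph automorphism into an exact equality of the partial derivatives of the \emph{aggregate} quantity. One must argue cleanly that relabeling by $\phi$ is a measure-preserving bijection on CTMC sample paths that fixes $f$ at the homogeneous point, so that differentiating the identity $f(\dots,p_{j_1},\dots) = f(\dots,p_{\phi^{-1}(j_1)},\dots)$ equates the two sensitivities. A secondary technical point is controlling the analyticity and the size of the second-order remainder uniformly enough to state the bound as $O(\sigma^2)$; this is precisely the content of the averaging principle of~\cite{PNAS-12}, which I would invoke rather than reprove, since the only new ingredient here is that the same cancellation applies to the extra parameter~$r$.
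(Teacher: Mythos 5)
Your proposal is correct and follows essentially the same route as the paper: the paper's proof likewise verifies smoothness of $f$ in the parameters and weak symmetry (which is exactly your automorphism argument from vertex-transitivity), and then invokes the averaging principle of~\cite{PNAS-12}. You merely unpack the principle's internal mechanism (Taylor expansion about the homogeneous point and cancellation of the first-order term against the zero-mean deviations), while correctly deferring the uniform second-order remainder control --- including the point that the naive bound $\sum_j(\delta p_j)^2 = M\sigma_p^2$ must be offset by the $O(1/M)$ scaling of per-consumer sensitivities --- to~\cite{PNAS-12}, just as the paper does.
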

\begin{proof} 
Following~\cite{PNAS-12}, the adoption curve $f(t;p_1, \dots, p_M,q_1, \dots, q_M,r_1, \dots, r_M)$ satisfies the following two conditions:
\begin{enumerate}
  \item $f$ is twice continuously-differentiable in $\{p_i,q_i,r_i\}_{i=1}^M$. 
  \item $f$ is weakly-symmetric in~$p$, i.e., for any $\{p, \tilde{p}, q, r\}$ and $i_0 \in \{1, \dots, M \}$,
    if $p_i = p$ for $i \not= i_0$, $p_{i_0} = \tilde{p}$, $q_i = q$ for all~$i$,
and $r_i = r$ for all~$i$, then~$f$ is independent of~$i_0$. Similarly, $f$ is weakly-symmetric in~$q$ and in~$r$.
\end{enumerate}
Indeed, condition~1 can be proved as in~\cite{PNAS-12}. Condition~2 follows from the vertex-transitive property.
Hence, the result follows from the averaging principle.
\end{proof}

In Figures~\ref{fig:cellular_1D_hetero_main} and~\ref{fig:cellular_2D_hetero_main} we present ABM simulations of the heterogeneous discrete Bass-SIR model on a periodic one-sided 1D network and on a periodic 2D~network, respectively, with 
$p_i = p(1+ \eta U(i))$  where $U$ is uniformly distributed in $[-1, 1]$,
and similarly for~$q_i$ and~$r_i$.
At the  heterogeneity level $\eta = 25\%$, the fractional adoption is nearly identical to the 
homogeneous one. Even at the heterogeneity level  $\eta = 50\%$, the aggregate adoption level is only slightly lower than in the homogeneous case~\footnote{The fact that heterogeneity {\em slows down} the diffusion can be easily proved when $q=0$. Indeed,
by~\eqref{eq:Sf_ext},  
$
f^{\rm ext}(t;p_1, \dots, p_M,r_1, \dots, r_M)  =  \frac{1}{M} \sum_{j=1}^M (1-e^{- p_j t})<
1-e^{- \bar{p} t} = f^{\rm ext}(t;\bar{p},\bar{r}), 
$
where the inequality follows form the fact that for $g(x) = 1-e^{-x}$, 
 $g''<0$, and so $ \frac{1}{M} \sum_{j=1}^M g(p_j)< g(\sum_{j=1}^M p_j)$.
 }.

\begin{figure}[ht!]
\begin{center}
\scalebox{0.5}{\includegraphics{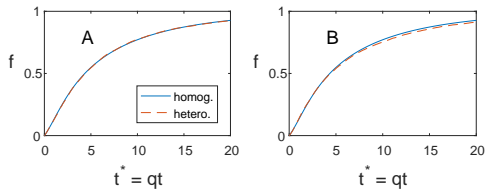}}
\caption{Fraction of adopters in [a single simulation of] the heterogeneous discrete Bass-SIR model~\eqref{eq:Bass-SIR-hetero} on a one-sided 1D~network (dashes). Dotted line
 is the homogeneous case.
Here $p=0.01$, 
$q= 0.1$, $r= 0.1$, and $M=10,000$.  Level of heterogeneity is:  A)~$\eta= 25\%$.
B)~$\eta = 50\%$. 
 }
\label{fig:cellular_1D_hetero_main}
\end{center}
\end{figure}

\begin{figure}[ht!]
\begin{center}
 \scalebox{0.5}{\includegraphics{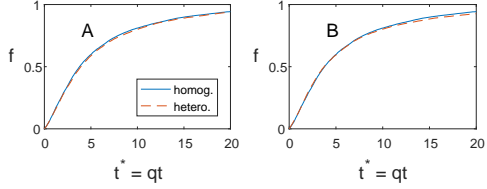}}
\caption{Same as Figure~\ref{fig:cellular_1D_hetero_main} on a 2D~network.
 }
\label{fig:cellular_2D_hetero_main}
\end{center}
\end{figure}

In Figures~\ref{fig:cellular_small_worlds_hetero_main} and~\ref{fig:cellular_scale_free_hetero_main} we present ABM simulations of the heterogeneous discrete Bass-SIR model on two networks which are not vertex-transitive: A small-world network~\cite{Watts-98}, constructed by
adding 5\% random long-range connections to a one-sided 1D network, and
 a scale-free network,  constructed using the Barab\'asi-Albert (BA) preferential-attachment algorithm~\cite{BA-99} in which each new node makes a single new link, respectively. In both cases, we again observe that heterogeneity has a negligible case on the aggregate diffusion. The result for a small-world network could be expected, since a small-world structure has a negligible effect on the diffusion in the 
Bass and Bass-SIR models~\cite{OR-10,Bass-SIR-model-16}. The result for a scale-free network in less expected, and may has to do with the 
surprising equivalence between diffusion in scale-free and Cartesian networks~\cite{Bass-SIR-model-16}. Alternatively, it may be an indication that heterogeneity has a negligible effect whenever the number of consumers is sufficiently large.

\begin{figure}[ht!]
\begin{center}
\scalebox{0.5}{\includegraphics{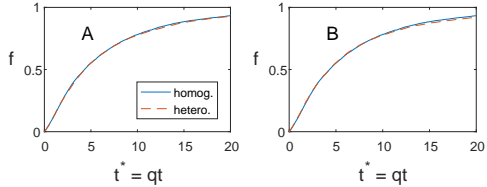}}
\caption{Same as Figure~\ref{fig:cellular_1D_hetero_main} on a small-world network.
 }
\label{fig:cellular_small_worlds_hetero_main}
\end{center}
\end{figure}

\begin{figure}[ht!]
\begin{center}
\scalebox{0.5}{\includegraphics{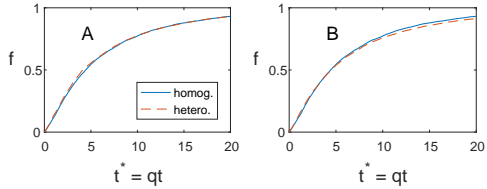}}
\caption{Same as Figure~\ref{fig:cellular_1D_hetero_main} on a scale-free network.
 }
\label{fig:cellular_scale_free_hetero_main}
\end{center}
\end{figure}

\section{Internal (nonlinear) recoveries}
\label{sec:nl-recovery}
 
In~\cite{Cannarella-14}, Cannarella and Spechler analyzed diffusion of 
online social networks, such as MySpace and Facebook.
They argued that recoveries (i.e., people leaving the social network) 
result from interactions between infected  (current) members and recovered (past) members.
Therefore, they introduced a modified SIR model on a complete network in which the relation $R' = r I$
was replaced with~\footnote{See~\cite{Bass-SIR-model-16} for why SIR models are inappropriate for diffusion of new products.} 
\begin{equation}
\label{eq:R'-Cannarella}
R' = r_{\rm nl} I R.
\end{equation}
  Since $R(0) = 0$, however, under relation~\eqref{eq:R'-Cannarella} 
 there will be no recoveries. Hence, they artificially set 
$R(0) = R_0$, where $0<R_0 \ll 1$ was a fitted parameter. To avoid this artificial fix and yet allow for nonlinear recoveries, we set
$$
R'(t) =   (r +r_{\rm nl}  R)I.
$$
 Thus, in the spirit of the Bass model, adopters can recover  independently of others (``external recoveries''), as well as  
 through interactions with recovered people (``internal recoveries''). 
 This leads to the {\bf modified Bass-SIR model}
\begin{subequations}
\label{eq:Bass_SIR-modified}
\begin{equation}
 S'(t) = -S (p + q I), \quad I'(t) = S (p + q I)-(r +r_{\rm nl}  R)I, \quad  R'(t) =   (r +r_{\rm nl}  R)I,
\end{equation}
\begin{equation}
S(0) = 1, \quad I(0) = 0, \quad R(0)  = 0.
\end{equation}
\end{subequations}

 Since $R(t) \le 1$, nonlinear internal recoveries can have a dominant effect over linear external ones 
(i.e., $r_{\rm nl}R  \gg r$), only if $r_{\rm nl} \gg r$.
To see the dynamics in this case, we 
set $r = 0.001$, so that external recoveries would have a negligible effect, and
 $r_{\rm nl} = 0.04$, so that $r_{\rm nl} \gg r$.
Since $r_{\rm nl} \ll 1$,  nonlinear recoveries become important only 
once most of the population adopts. Hence, 
the overall adoption $f = I+R$ is unaffected by the nonlinear recoveries, see Figure~\ref{fig:cellular_nonspatial_SIR_modified_dynamics}A.
Nonlinear recoveries, however, accelerate the transition from infected to recovered, 
changing it from a linear rate to an exponential one,  see Figure~\ref{fig:cellular_nonspatial_SIR_modified_dynamics}B and~\ref{fig:cellular_nonspatial_SIR_modified_dynamics}C.
Therefore, nonlinear recoveries are important if the firm only cares about the number of infected consumers (for example, if being recovered  means to stop using the product). If, however, recovered adopters bought the product
or still use it, but simply stopped promoting it, the effect of nonlinear recoveries is of much less importance to the firm.

\begin{figure}[ht!]
\begin{center}
\scalebox{0.7}{\includegraphics{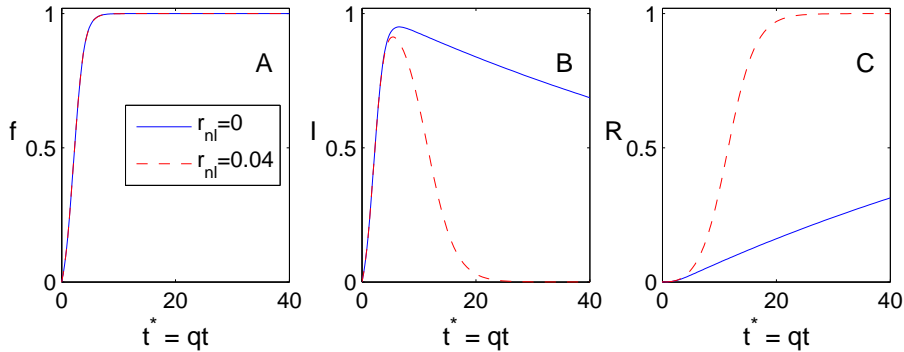}}
\caption{The modified Bass-SIR model~\eqref{eq:Bass_SIR-modified}
 with $p=0.01$,  $q= 0.1$, $r = 0.001$, and~$r_{\rm nl}=0$ (solid) or~$r_{\rm nl}=0.04$ (dashes).  
  A)~$f(qt)$. The two curves are indistinguishable. B)~$I(qt)$. C)~$R(qt)$.}
\label{fig:cellular_nonspatial_SIR_modified_dynamics}
\end{center}
\end{figure}

\subsubsection*{Acknowledgment}

We thank O. Raz, G. Ariel, and K. Gillingham for useful discussions. Part of this research was conducted 
while the author was visiting the Center for Scientific Computation and Mathematical Modeling (CSCAMM) at the University of Maryland.
This material is based upon work supported by the Kinetic Research Network (KI-Net) under NSF Grant No. RNMS  \#1107444,
and by the Department of Energy under Award Numbers DE-EE0007657.

\newpage

\end{document}